\newtheorem{axiom}{Axiom}
\newtheorem{proposition}{Proposition}
\newtheorem{example}{Example}
\newtheorem{corollary}{Corollary}
\newenvironment{proof}[1][Proof]{\textbf{#1.} }{\ \rule{0.5em}{0.5em}}
\journal{unknown}
\begin{document}

\begin{frontmatter}

\title {\bf{The value and credits of $n$-authors publications}} 

\author{Lutz Bornmann$^{\dagger}$ and Ant\'{o}nio Os\'{o}rio$^{\ddagger}$}

\address {$^{\dagger}$Max Planck Society Munich MPG (bornmann@gv.mpg.de) 
\linebreak $^{\ddagger}$Universitat Rovira i Virgili (Dept. of Economics) and CREIP (antonio.osoriodacosta@urv.cat).}


\begin{abstract}

Collaboration among researchers is becoming increasingly common, which raises a large number of scientometrics questions for which there is not a clear and generally accepted answer. For instance, what value should be given to a two-author or three-author publication with respect to a single-author publication? This paper uses axiomatic analysis and proposes a practical method to compute the expected value of an $n$-authors publication that takes into consideration the added value induced by collaboration in contexts in which there is no prior or ex-ante information about the publication's potential merits or scientific impact. The only information required is the number of authors. We compared the obtained theoretical values with the empirical values based on a large dataset from the Web of Science database. We found that the theoretical values are very close to the empirical values for some disciplines, but not for all. This observation provides support in favor of the method proposed in this paper. We expect that our findings can help researchers and decision-makers to choose more effective and fair counting methods that take into account the benefits of collaboration.

\end{abstract}

\begin{keyword} Co-authorship; Counting methods; Publication value; Axiomatic analysis; Bibliometrics.
\linebreak \textit{JEL classification:} C65, D04.

\end{keyword}

\end{frontmatter}

\section{\textbf{Introduction}}

%
%
%
%
%
%
%

Collaboration among researchers is becoming increasingly common, which may
reflect the increasing complexity and interdisciplinary content of research (%
\citealp{gazni2012mapping}; \citealp{katz1997research}; %
\citealp{lariviere2015team}; \citealp{persson2004inflationary}; %
\citealp{wuchty2007increasing}). In this context, the more researchers are
involved in a project, the more difficult it is for third parties (e.g., a
reviewers' panel or an evaluation committee) to quantify or observe (even
imperfectly) the contribution of each researcher. The problem is so severe
that some authors in the literature propose that publications should
unambiguously list the specific contribution of each author (%
\citealp{cronin2001hyperauthorship}; \citealp{hu2009loads}; %
\citealp{tscharntke2007author}).

The increasing collaboration raises a large number of scientometrics
questions for which there is not a clear and generally accepted answer. For
instance, what value should be given to a two-author or a three-author
publication with respect to a single-author publication? Who should be
ranked first, an individual with a single-author publication or an
individual with two three-author publications? The answer to these questions
is crucial because academics and researchers (as well as their institutions)
are ranked, rewarded, financed and promoted according to the quantity and
quality of their publications. Consequently, we need to develop adequate
counting methods that are sufficiently flexible and that can benefit from
generalized support.

\bigskip

The problem of how to count publications with several authors has been
discussed previously in the literature (\citealp{egghe2000methods}; %
\citealp{lindsey1980production}; \citealp{price1981multiple}; among others).
The most common solution is to allocate the credits proportionally to the
number of authors (fractional counting). This practice is incomplete because
it ignores the potential synergies that result from collaborations, and
consequently underestimates the credits of each author and the overall value
of the publication. Publications with several authors tend to have more
citations than single-author publications (\citealp{hsu2011correlation}; %
\citealp{onodera2015factors}; among others).\footnote{%
For instance, \cite{hsu2011correlation} approximated the relation between
the number of authors\ and the number of citations by the expression $%
citations_{n}=(n/5)^{1/3},$ where $n$ denotes the number of authors.}
Therefore, the value of a publication should increase with the number of
authors. This issue has been consistently ignored in the literature.

Another common solution is to allocate the full credit of a publication to
every author (full counting). This practice overestimates the credits of
each author and the value of the publication---each author is treated as a
single author, which leads to serious distortions in comparing individuals
with different co-authorship patterns. Moreover, it creates incentives to
the addition of \textquotedblleft ghost" co-authors, which is not desirable.
Thus, as pointed out by \cite{hirsch2010index}, we need adequate counting
methods that take into consideration the number of authors involved in a
publication. This issue is the objective of this paper.

The methods discussed so far are useful when all authors are equally
important, as for example, when authors are ordered alphabetically, which is
common in mathematics, economics, finance, and high energy physics (%
\citealp{frandsen2010name}; \citealp{hu2009loads}; %
\citealp{maruvsic2011systematic}; \citealp{waltman2012empirical}). However,
in other scientific fields, authors are ordered in accordance with their
contribution to the publication, with the first author typically being
regarded as the most important. In this context, several approaches have
been proposed in the literature (\citealp{waltman2016review}). One
possibility is to allocate the full credit to the first author and/or the
corresponding author, which can be more than one, and no credit to the other
authors (\citealp{egghe2000methods}; \citealp{gauffriau2007publication}; %
\citealp{huang2011counting}; \citealp{lange2001citation}; %
\citealp{van1997fractional}).

However, in general, all listed authors have contributed to the publication
and for that reason should receive some credit, with the most credits being
given to the first author, followed by the second author, and so on. Several
distributions of credits have been proposed in the literature. These include
the arithmetic counting method (\citealp{van1997fractional}), in which
credits are linearly distributed in decreasing order among the authors, the
geometric counting method (\citealp{egghe2000methods}), in which each author
always gets twice the credits of the following author, the harmonic counting
method (\citealp{hagen2008harmonic}; \citealp{sekercioglu2008quantifying}),
in which the $i$-th ranked author receives $1/i$ of the credit received by
the first author, and the axiomatic counting method (%
\citealp{stallings2013determining}), which is conceptually the most similar
to that proposed in the present paper. Other counting methods and procedures
have been proposed (\citealp{abramo2013importance}; %
\citealp{assimakis2010new}; \citealp{kim2014network}; \citealp{liu2012fairly}%
; \citealp{lukovits1995correct}; \citealp{trueba2004robust}). All these
methods are based on some intuitively correct argument, which makes it
difficult to compare them or to claim that one method is superior to the
others (\citealp{kim2015rethinking}; \citealp{xu2016author}).

\bigskip

This paper proposes a practical and simple method to compute the expected
value of an $n$-authors publication that takes into consideration the
potential added value induced by collaboration in contexts in which there is
no prior or ex-ante information about the publication's potential merits or
scientific impact. The only information required is the number of authors.
The method is neutral to the identity and affiliation of the authors, and it
is flexible enough to accommodate information such as the order of authors,
the Journal Impact Factor, the number of citations, or to be used in
connection with other scientometrics indicators, like the \textit{h}-index (%
\citealp{hirsch2005index}) and its several variations and extensions that
have been proposed in the literature (\citealp{bornmann2011multilevel}).
These aspects make the proposed method extremely practical and useful to
deal with real life situations in which consensus is difficult, and
distinguishes the present paper from the existing literature.

\bigskip

In this context, we apply a set of principles or axioms that we consider
fundamental to determine the expected value of an $n$-authors publication.%
\footnote{%
Axiomatic approaches have been frequently used to study problems in
scientometrics. For instance, in the axiomatic characterization of
bibliometric impact indicators, as the \textit{h}-index (%
\citealp{deineko2009new}; \citealp{kongo2014alternative}; %
\citealp{miroiu2013axiomatizing}; %
\citealp{quesada2009monotonicity,quesada2010more,quesada2011further}; %
\citealp{woeginger2008axiomatic,woeginger2008symmetry}) or some of its
variants (\citealp{adachi2015further}; \citealp{quesada2011axiomatics}; %
\citealp{woeginger2008axiomatic0,woeginger2009generalizations}), as well as
the Euclidian index (\citealp{perry2016count}), and in the axiomatic
characterization of rankings of authors and journals derived from
bibliometric indicators (%
\citealp{bouyssou2010consistent,bouyssou2011bibliometric,bouyssou2014axiomatic,bouyssou2016ranking}%
; \citealp{marchant2009axiomatic}).} First, we present two basic and
generally accepted inequalities regarding the expected value of $n$-authors
publications. These inequalities will help us to build our argument.
Subsequently, we require that the value of a publication is equal to the
aggregated sum of all authors' efforts and that successful collaboration
must satisfy some minimum amount of aggregate effort. In addition, there is
an upper bound on the maximum effort provided by each co-author. In this
context, each author expends its effort in the collaboration that returns
the largest amount of credits and has no incentives to expend effort in any
other collaboration. Lastly, in order to obtain analytical results we
consider that every feasible effort is equally likely.

The result is a unique expression for the expected value of $n$-authors
publications that increases (at a decreasing rate) with the number of
authors. Some properties of the proposed method are discussed in connection
with the existing literature.

\bigskip

Finally, using citation impact data, we contrasted the expected values with
empirical bibliometric data from the Web of Science (WoS, Clarivate
Analytics). We found that the theoretically expected values are very close
to the empirical values. This observation provides strong support in favor
of this method of calculating the expected value of $n$-authors publications
described in this paper. Nonetheless, these results also make explicit that
this method should not be taken as a universal solution to all disciplines.

\bigskip

This paper is organized as follows: Section \ref{sec_equal} presents the
main axioms and provides explanatory information, Section \ref{sec_main}
presents the theoretical results, Section \ref{sec_compare}\ compares the
obtained theoretical results with bibliometric data, and Section \ref%
{sec_concl} discusses the results of this study.

\section{\textbf{The axioms of an }$n$\textbf{-authors publication}\label%
{sec_equal}}

Suppose that $n=1,...,\infty $ authors collaborate on an academic or
scientific project. Suppose that each of these authors is equally important,
i.e., they are expected to provide the same effort and obtain the same
credits, which does not mean that they will necessarily do it.

Let $v_{n}$ denote the associated $n$-authors publication value and $%
\overline{v}_{n}$ denote the associated $n$-authors publication\ expected
value. We distinguish between the publication value and the publication
expected value. The former is unique to each collaboration and unknown to
third parties, while the latter is an estimation of this value. The
objective of this paper is to approximate the latter value.

Let $\overline{c}_{n}$ denote the\ credits awarded to each author, which in
the case that all authors are equally important corresponds to the
publication expected value divided by the number of authors, i.e., $%
\overline{c}_{n}=\overline{v}_{n}/n.$ Let $e_{in}$ denote the effort or
contribution of author $i=1,...,n$ in the $n$-authors publication or
collaboration. In our context, $e_{in}$ captures simultaneously the
quantitative and the qualitative dimensions of effort.

\bigskip

In what follows, we present and discuss two basic inequalities that should
be satisfied by the expected value of an $n$-authors publication.
Subsequently, we present a set of axioms that relate effort, value and the
publication expected value, and that characterize the approach in this paper.

\subsection{\textbf{Basic inequalities and discussion}}

We start by noting that the addition of more authors should increase the
expected value of a publication (\citealp{hsu2011correlation}; %
\citealp{onodera2015factors}; among others), or at least not decrease it.
The interaction of authors with potentially different experiences and
knowledge generates positive synergies and the cross-fertilization of ideas.
In other words, the following inequality should be always satisfied:%
\begin{equation}
\overline{v}_{1}\leq \overline{v}_{2}\leq ...\leq \overline{v}_{n}\text{ \
for \ }n=1,...,\infty .  \label{chain_value}
\end{equation}

The expected value of a publication with $n$ authors must have at least the
same value as a publication with $n-1$ authors, and so on. The aggregate
effort of $n$ authors should result in something quantitatively and
qualitatively better than the aggregate effort of $n-1$ authors. The
question is how much better?

\bigskip

In this context, if all authors are equally important, i.e., they provide
the same expected effort, the publication average value (which is also the
credits awarded to each author if all authors are equally important) must
decrease with the number of authors. Otherwise, there would be an incentive
to add "ghost" co-authors to the collaboration because the publication
average value would increase. In other words, the following inequality
should be always satisfied:%
\begin{equation}
\frac{\overline{v}_{1}}{1}\geq \frac{\overline{v}_{2}}{2}\geq ...\geq \frac{%
\overline{v}_{n}}{n}\text{ \ for \ }n=1,...,\infty .  \label{chain_average}
\end{equation}

Therefore, in the case that all authors are equally important, the\ credits
awarded to each author and the publication average value are the same, i.e., 
$\overline{c}_{n}=\overline{v}_{n}/n$ for $n=1,...,\infty .$

In our context, inequalities (\ref{chain_value}) and (\ref{chain_average})
are general and intuitive, and they could have been written in the form of
axioms.\footnote{%
For instance, inequality (\ref{chain_value}) appears in \cite%
{stallings2013determining} as an axiom. Monotonicity axioms are common in
studies characterizing bibliometric impact indicators. For instance, it is
common to impose that the number of publications and/or citations should not
lower the value of the indicator (\citealp{adachi2015further}; %
\citealp{deineko2009new}; \citealp{kongo2014alternative}; %
\citealp{quesada2009monotonicity,quesada2011axiomatics,quesada2011further}; %
\citealp{woeginger2008axiomatic0,woeginger2008axiomatic,woeginger2008symmetry,woeginger2009generalizations}%
).} However, in order to avoid dependence issues with the axioms presented
below, we will not do it.

Together, inequalities (\ref{chain_value}) and (\ref{chain_average}) imply
the following bounds on the expected value of an $n$-authors publication: 
\begin{equation}
\overline{v}_{n-1}\leq \overline{v}_{n}\leq \frac{n}{n-1}\overline{v}_{n-1}%
\text{ \ for \ }n=2,...,\infty .  \label{basic_ineq}
\end{equation}%
In other words, the expected value of an $n$-authors publication should be
in the interval defined by inequality (\ref{basic_ineq}).\ 

\bigskip

In this context, the most extreme upper bound in this interval is obtained
when $\overline{v}_{n}=\frac{n}{n-1}\overline{v}_{n-1}$ for $n=2,...,\infty
. $ Consequently, we obtain recursively that $\overline{v}_{n}=n\overline{v}%
_{1}$ and $\overline{c}_{n}=\overline{v}_{1}$ for $n=1,...,\infty .$\ The
rule that results from this upper bound is often found in practice and is
based on awarding the full value of a single-author publication to each of
the $n$\ authors (full counting). This practice is excessively generous
because the value of a single-author publication is multiplied by the number
of authors.\ 

Similarly, the most extreme lower bound in this interval is obtained when $%
\overline{v}_{n}=\overline{v}_{n-1}$ for $n=2,...,\infty .$\ In this case,
we obtain recursively that $\overline{v}_{n}=\overline{v}_{1}$ and $%
\overline{c}_{n}=\overline{v}_{1}/n$ for $n=1,...,\infty .$\ The rule that
results from this lower bound corresponds to another commonly used practice,
which consists in dividing the value of a single-author publication equally
among the $n$ authors (fractional counting). This practice is more similar
to the one in this paper, but ignores the added value resulting from
collaboration and, consequently, underestimates the value of a publication.

\subsection{\textbf{Axioms}}

The bounds found in inequality (\ref{basic_ineq}) are frequently seen as
insufficient. Consequently, they are unable to provide meaningful and
acceptable predictions about the expected value of an $n$-authors
collaboration. In what follows, we present a set of axioms, which will allow
us to obtain an analytical expression for the expected value of an $n$%
-authors publication. Then, as a corollary to this main result, we will see
that the satisfaction of those axioms implies the satisfaction of inequality
(\ref{basic_ineq}).

\subsubsection{\textbf{Axioms about aggregated effort\label{subsect_aggreg}}}

We start by noting that the value of an $n$-authors publication ($v_{n}$)
must take into consideration the effort of each of the $n$ authors ($e_{in}$%
), i.e., the value of an $n$-authors publication must be equal to the
aggregated sum of the efforts of the $n$ authors participating on it.

\begin{axiom}[aggregate collaboration function]
\label{colab_funct}$v_{n}=e_{1n}+e_{2n}+...+e_{nn}$\ for $n=1,...,\infty .$
\end{axiom}

Collaboration in Axiom \ref{colab_funct} is expressed in the most neutral
and simplest way: as the sum of the $n$ authors' contributions. However, we
could have considered other functional forms. For instance, we could have
considered some terms that would explicitly capture synergies and positive
interaction effects resulting from collaboration between individuals with
potentially different experiences and knowledge. Similarly, we could have
considered some terms that would explicitly capture the increasing
coordination difficulty and the possibility of free-riding, which frequently
undermine the processes of collaboration. However, we have no theory to
support any such particular functional form. Consequently, the inclusion of
such terms would be pretty much ad-hoc and questionable, and would
inevitably influence the results in one way or another. For that reason,
collaboration in Axiom \ref{colab_funct} is expressed in the most neutral
and simplest way.

\bigskip

The set of possible effort profiles $(e_{1n},e_{2n},...,e_{nn})$ that can be
considered in Axiom \ref{colab_funct} is uncountable and very diverse.
However, not all these effort profiles can be accepted as valid. Some effort
profiles have associated such low levels of aggregate effort that successful
collaboration is impossible. Consequently, not all collaborations will
necessarily lead to a publication.

In our context, a publication is only possible if the aggregate effort is
above some threshold. The following axiom establishes the minimum amount of
aggregate effort necessary for an $n$-authors' collaboration to result in a
publication.

\begin{axiom}[minimum aggregate effort]
\label{aggregated_eff}Successful collaboration must satisfy: $%
e_{1n}+e_{2n}+...+e_{nn}\geq \overline{v}_{n-1}$ for $n=1,...,\infty .$
\end{axiom}

An $n$-authors publication is only possible if the aggregate effort is above
the expected value of an $n-1$-authors publication. In other words, the
value of an $n$-authors publication must exceed the expected value of an $%
n-1 $-authors publication, and the value of an $n-1$-authors publication
must exceed the expected value of an $n-2$-authors publication, and so on.
The idea is that the consideration of additional co-authors must lead to an
increase in value above the expected value of the nearest collaboration with
a lower number of co-authors. Otherwise, individuals would have no
incentives to consider additional co-authors.

\subsubsection{\textbf{Axioms about individual effort}}

The main difficulty regarding the individual effort provided by each author (%
$e_{in}$) is the fact that it is not observable by third parties (e.g., a
reviewers' panel or an evaluation committee), even imperfectly---and
sometimes not even observable by the other co-authors.

In this context, in order to obtain analytical results and pointwise
predictions about the expected value $\overline{v}_{n},$ the uniform
distribution seems to be the most focal and neutral assumption, because it
has implicitly an impartial and equal treatment of all possible efforts.
This idea becomes even more natural if we have no theory to support any
other distribution (the same axiom appears in \cite{stallings2013determining}%
).

\begin{axiom}[uniform distribution of effort]
\label{unif_dist}$e_{in}$ is independent and uniformly distributed for $%
i=1,...,n$ and $n=1,...,\infty .$
\end{axiom}

Consequently, in order to obtain the expected value of $\overline{v}_{n}$,
we must build expectations about the effort provided by each author. In this
context, Axioms \ref{colab_funct} and \ref{unif_dist} imply that $\overline{v%
}_{n}=E(e_{1n})+E(e_{2n})+...+E(e_{nn})$\ for $n=1,...,\infty .$

Moreover, in order to allocate the publication credits in the most fair and
objective way, the credits awarded to each author in an $n$-authors
publication must be equal to the expected effort provided.\footnote{%
Equity and fairness require an adequate balance between reward and effort.
In this context, the sense of fairness depends on the individual comparison
between their own balance and the balance of the others, with whom the
individual deems to be relevant references (\citealp{adams1963}). In our
context, the effort used as reference is the effort provided by the other
co-authors, while the reward used as reference is the credits awarded to the
other co-authors.} In this context, in the case that all authors are equally
important, they are expected to provide the same level of effort and obtain
the same credits, i.e., $E(e_{1n})=...=E(e_{nn})=\overline{e}_{n}=\overline{c%
}_{n}$ and $\overline{e}_{n}=\overline{c}_{n}=\overline{v}_{n}/n$ for $%
n=1,...,\infty $.

\bigskip

In Section \ref{subsect_aggreg}, the joint consideration of Axioms \ref%
{colab_funct} and \ref{aggregated_eff} establishes a floating lower bound on
the individual effort, but in order to have a well-defined problem, we also
need an upper bound on the individual effort.

In this context, we consider that an author is someone with a certain
maximum amount of effort to spend in a research collaboration. Therefore,
before taking part in a research collaboration, a rational author takes into
consideration the expected effort required and the credits obtained in each
collaboration, in order to maximize the amount of credits obtained.
Consequently, if the author $i$ participates in an $n$-authors collaboration
that means that the maximum amount of available effort that author $i$ has
to spend in that collaboration is lower than the expected effort required to
participate in an $n-1$-authors collaboration, i.e., $e_{in}\leq \overline{e}%
_{n-1}$ for $i=1,...,n$ and $n=1,...,\infty .$ This behavior is rational
because the credits tend to decrease with the number of authors (see
inequality (\ref{chain_average})). Therefore, author $i$ chooses the
research collaboration with the lowest possible number of authors, but in
which its effort is expected to be enough. Otherwise, the $n$-authors
collaboration would not be stable, because author $i$ would have an
incentive to move its effort to a collaboration with a lower number of
authors, but that would return more credits.

\begin{axiom}[maximum individual effort]
\label{incent_compat}$e_{in}\in \lbrack 0,\overline{e}_{n-1}]$\ for $%
i=1,...,n$ and $n=2,...,\infty .$
\end{axiom}

This axiom implies that as the number of authors increases, the maximum
amount of individual effort spent by each author decreases, which is in line
with inequality (\ref{chain_average}), and the connection between expected
effort and reward made after Axiom \ref{unif_dist}, i.e., $\overline{e}_{n}=%
\overline{c}_{n}=\overline{v}_{n}/n\leq \overline{e}_{n-1}=\overline{c}%
_{n-1}=\overline{v}_{n-1}/(n-1)$ for $n=2,...,\infty .$ Otherwise, each
author would prefer to spend that same amount of effort in a publication
with fewer co-authors that would award more credits. In this context, Axiom %
\ref{incent_compat} can also be seen as an incentive compatible and a
collaboration stability condition.

\bigskip

The following example illustrates some of the arguments that have motivated
the axioms presented in this paper.

\begin{example}
\label{example}Suppose that $n=3$ and $\overline{v}_{2}=4/3$ which implies
that $\overline{e}_{2}=\overline{c}_{2}=2/3.$ Therefore, according to Axioms %
\ref{unif_dist} and \ref{incent_compat}, the individual effort can take any
value in the interval $[0,2/3]$ with equal probability, but conditional on
satisfying the minimum aggregate effort condition $e_{13}+e_{23}+e_{33}\geq 
\overline{v}_{2}=4/3$ of Axiom \ref{aggregated_eff}.\ In this context, the
effort profiles $(e_{13},e_{23},e_{33})=(0.65,0.60,0.10)$ and $%
(e_{13},e_{23},e_{33})=(0.45,0.45,0.45),$ among many other effort profiles,
satisfy the minimum aggregate effort condition of Axiom \ref{aggregated_eff}
and for that reason are valid effort profiles in the computation of the
publication value of Axiom \ref{colab_funct}. This example considers only
two effort profiles, but the set of feasible effort profiles is uncountable
and very heterogeneous. It is the consideration of all these feasible effort
profiles that allows us to obtain a unique prediction of the $n$-authors
publication expected value. On the other hand, the effort profiles $%
(e_{13},e_{23},e_{33})=(0.65,0.20,0.10)$ and $%
(e_{13},e_{23},e_{33})=(0.40,0.40,0.40),$ among many other effort profiles,
do not satisfy the minimum aggregate effort condition of Axiom \ref%
{aggregated_eff} and for that reason are not valid because they would result
in collaboration failures.
\end{example}

\section{\textbf{The value of an }$n$\textbf{-authors publication\label%
{sec_main}}}

In this section, in order to compute the expected value of an $n$-authors
publication, we consider all possible effort configurations that
simultaneously satisfy Axioms \ref{colab_funct}-\ref{incent_compat}. These\
axioms determine how and which effort profiles should be considered in the
computation of the expected value of an $n$-authors publication. The
following proposition presents the main result in this paper.

\begin{proposition}
\label{prop_equal_credits}Suppose that Axioms \ref{colab_funct}-\ref%
{incent_compat} are satisfied. Then, the expected value of an $n$-authors
publication is uniquely given by:%
\begin{equation}
\overline{v}_{n}=\frac{2n}{n+1}\overline{v}_{1},  \label{exp_equal_credits}
\end{equation}%
for $n=1,...,\infty ,$ where $\overline{v}_{1}$ is the value of a
single-author publication, and the amount of credits awarded to each author
when all authors are equally important is uniquely given by $\overline{c}%
_{n}=\overline{v}_{n}/n$ for $i=1,...,n$ and $n=1,...,\infty .$
\end{proposition}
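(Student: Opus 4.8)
The plan is to argue by induction on $n$ and then solve the resulting recursion by telescoping. For the base case $n=1$ the claimed formula reads $\overline{v}_1=\tfrac{2}{2}\overline{v}_1$, which is trivially true, and Axiom~\ref{incent_compat} is vacuous. So suppose $\overline{v}_{n-1}$ has already been pinned down. By Axioms~\ref{colab_funct} and~\ref{unif_dist}, together with the identities $\overline{e}_{n-1}=\overline{c}_{n-1}=\overline{v}_{n-1}/(n-1)$ recorded after Axiom~\ref{unif_dist}, the effort profile $(e_{1n},\dots,e_{nn})$ consists of $n$ independent variables, each uniform on $[0,\overline{v}_{n-1}/(n-1)]$ by Axiom~\ref{incent_compat}, and Axiom~\ref{aggregated_eff} restricts attention to the event $A=\{e_{1n}+\cdots+e_{nn}\ge\overline{v}_{n-1}\}$. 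Hence
\[
\overline{v}_n=E\big[\,e_{1n}+\cdots+e_{nn}\ \big|\ e_{1n}+\cdots+e_{nn}\ge\overline{v}_{n-1}\,\big].
\]
First I would check that $A$ has positive probability, so this conditional expectation is well defined; indeed, after the rescaling below its probability equals the unit-simplex volume $1/n!>0$.

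The next step is a change of variables. Put $u_i:=(n-1)\,e_{in}/\overline{v}_{n-1}$, so the $u_i$ are i.i.d.\ uniform on $[0,1]$ and $A=\{u_1+\cdots+u_n\ge n-1\}$. The key trick is the complementary substitution $w_i:=1-u_i$: the $w_i$ are again i.i.d.\ uniform on $[0,1]$, the event $A$ becomes the standard simplex $\{w_1+\cdots+w_n\le 1\}$, and $\sum_i u_i=n-\sum_i w_i$. Writing $W=\sum_{i=1}^{n}w_i$, its density on $[0,1]$ is the leading Irwin--Hall term $t^{n-1}/(n-1)!$, so a one-line integral gives
\[
E[W\mid W\le 1]=\frac{\int_0^1 t^{n}\,dt}{\int_0^1 t^{n-1}\,dt}=\frac{n}{n+1},\qquad E\big[\,\textstyle\sum_i u_i\ \big|\ \textstyle\sum_i u_i\ge n-1\,\big]=n-\frac{n}{n+1}=\frac{n^2}{n+1}.
\]
Undoing the rescaling yields the recursion
\[
\overline{v}_n=\frac{\overline{v}_{n-1}}{n-1}\cdot\frac{n^2}{n+1}=\frac{n^2}{n^2-1}\,\overline{v}_{n-1}.
\]

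Finally I would iterate: $\overline{v}_n=\overline{v}_1\prod_{k=2}^{n}\dfrac{k^2}{(k-1)(k+1)}$, and since $\prod_{k=2}^{n}\frac{k}{k-1}=n$ and $\prod_{k=2}^{n}\frac{k}{k+1}=\frac{2}{n+1}$, this telescopes to $\overline{v}_n=\frac{2n}{n+1}\overline{v}_1$, with $\overline{c}_n=\overline{v}_n/n$ holding by definition. Uniqueness is then immediate: at each stage the axioms completely determine the law of the effort profile (independent uniforms on a box, conditioned on a half-space), hence its expectation, so no freedom remains beyond the single number $\overline{v}_1$. The part I expect to need the most care is the conditional-expectation computation — recognizing that conditioning $n$ i.i.d.\ uniforms on a large-sum event becomes, after complementation, integration of the monomial density $t^{n-1}/(n-1)!$ over the standard simplex — together with keeping the bookkeeping consistent between the base case $n=1$ and the domain restriction of Axiom~\ref{incent_compat}, which only applies for $n\ge 2$.
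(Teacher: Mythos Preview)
Your argument is correct and lands on exactly the same recursion $\overline{v}_n=\dfrac{n^2}{(n-1)(n+1)}\,\overline{v}_{n-1}$ that the paper derives, and you solve it the same way (telescoping the product). The difference is purely in how the conditional expectation is evaluated. The paper passes the indicator $\mathbf{1}_H$ into the integration limits by ordering the efforts $e_{1n}\ge e_{2n}\ge\cdots\ge e_{nn}$, multiplying by $n!$, and computing a nested integral with running lower bounds $(\overline{v}_{n-1}-e_{1n}-\cdots-e_{j-1,n})/(n-j+1)$; the algebra is heavier and the simplification to $n^2/(n^2-1)$ is asserted rather than displayed. Your route---rescale to $[0,1]^n$, complement $w_i=1-u_i$ to turn the large-sum event into the standard simplex $\{W\le 1\}$, and read off $E[W\mid W\le 1]=n/(n+1)$ from the leading Irwin--Hall term---is shorter and makes the answer transparent, at the small cost of invoking the fact that the density of $W=\sum w_i$ equals $t^{n-1}/(n-1)!$ on $[0,1]$. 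Both approaches rely on the same modeling step (independent uniforms on $[0,\overline{c}_{n-1}]$ conditioned on $\sum e_{in}\ge\overline{v}_{n-1}$), so the uniqueness claim is handled identically.
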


A few comments regarding the meaning and interpretation of the result
obtained are as follows.

First, by construction, expression (\ref{exp_equal_credits}) is uniquely
characterized by Axioms \ref{colab_funct}-\ref{incent_compat}.

Second, the expected value and the credits of the $n$-authors publications
are normalized with respect to the value of the single-author publication $%
\overline{v}_{1}.$ Once we know or attribute a value to $\overline{v}_{1},$
we can compute the value of $\overline{v}_{n}$ and $\overline{c}_{n}.$ In
this context, the value of the single-author publication is the unit of
measure.

Third, the results obtained must be interpreted in expected terms. Clearly,
in reality, not all $n$-authors publications are worth the same. Some
publications may be worth $n$ times more than a single-author publication
because of strong synergies and other interaction effects. In the other
extreme, some other publications may be worth as much as a single-author
publication because of free-riding and coordination problems. In between, we
may have single-authored publication that are worth more than multi-authored
publications, and so on.\footnote{%
This type of uncertainty and lack of information is the root of the problem
of computing each author's contribution, and the reason why several authors
in the literature suggest that publications should unambiguously list the
contribution of each author (\citealp{cronin2001hyperauthorship}; %
\citealp{hu2009loads}; \citealp{tscharntke2007author}).}

\begin{figure}[h]
\begin{center}
\includegraphics[width=.7\textwidth]{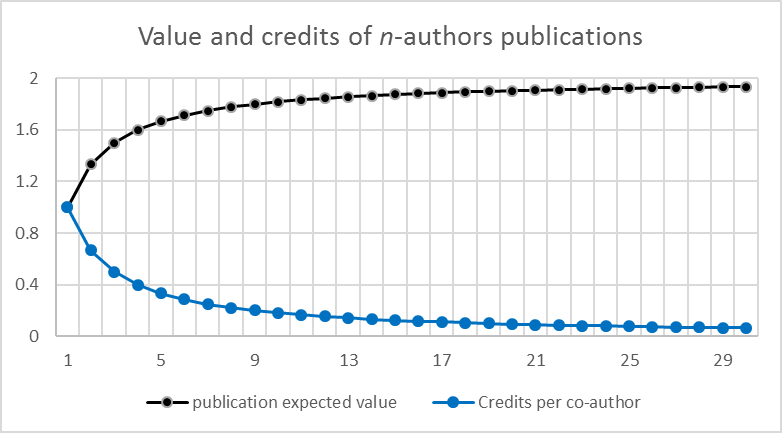}
\end{center}
\caption{{\protect\footnotesize Equaly important authors - credits awarded
to each author and the expected value of the $n$-authors publication for $%
n=1,\ldots ,30$ (unit of measure $\overline{v}_{1}=\overline{c}_{1}=1$).}}
\label{fig_proposition}
\end{figure}

Fourth, note that $\overline{v}_{n}$ converges to $2\overline{v}_{1}$ as $%
n\rightarrow \infty $. In other words, regardless of the number of
co-authors, the expected value of an $n$-authors publication cannot be
larger than twice the expected value of a single-author publication (see
Figure \ref{fig_proposition}). As we will see below, this observation may
not be empirically supported for all disciplines. However, we point out that
publications with large number of co-authors tend to be specific and very
particular (\citealp{cronin2001hyperauthorship}). For instance, \cite%
{xu2016author} point out that most credit allocation methods fail when there
are many co-authors and that we need specific methods to deal with
hyper-authorship, which is understood as ten or more co-authors (%
\citealp{liu2012fairly}; \citealp{tscharntke2007author}).

Fifth, as mentioned before, we do not need to explicitly impose inequalities
(\ref{chain_value}) and (\ref{chain_average}) in order for expression (\ref%
{exp_equal_credits}) in Proposition \ref{prop_equal_credits} to satisfy the
bounds implied by those inequalities (see inequality (\ref{basic_ineq})).
The following result formalizes this observation.

\begin{corollary}
\label{coro_satisfies}Expression (\ref{exp_equal_credits}) satisfies
inequalities (\ref{chain_value}) and (\ref{chain_average}).
\end{corollary}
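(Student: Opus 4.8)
The plan is to substitute the closed form $\overline{v}_{n}=\frac{2n}{n+1}\overline{v}_{1}$ from expression (\ref{exp_equal_credits}) directly into the two chains (\ref{chain_value}) and (\ref{chain_average}) and reduce each to the monotonicity of an elementary rational function of $n$. Since a publication value cannot be negative we have $\overline{v}_{1}\geq 0$, so the direction of every comparison is governed entirely by the coefficient $\frac{2n}{n+1}$ (and the inequalities are strict when $\overline{v}_{1}>0$).

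First, for inequality (\ref{chain_value}), I would rewrite the coefficient as $\frac{2n}{n+1}=2-\frac{2}{n+1}$. Because $\frac{2}{n+1}$ is strictly decreasing in $n$, the map $n\mapsto 2-\frac{2}{n+1}$ is strictly increasing, hence $\overline{v}_{n-1}\leq \overline{v}_{n}$ for every $n=2,\ldots ,\infty$, and chaining these gives $\overline{v}_{1}\leq \overline{v}_{2}\leq \cdots \leq \overline{v}_{n}$. Equivalently one checks the difference $\frac{2n}{n+1}-\frac{2(n-1)}{n}=\frac{2}{n(n+1)}>0$. Second, for inequality (\ref{chain_average}), dividing the closed form by $n$ yields $\overline{v}_{n}/n=\frac{2}{n+1}\overline{v}_{1}$; since $\frac{2}{n+1}$ is strictly decreasing in $n$, we get $\overline{v}_{n}/n\leq \overline{v}_{n-1}/(n-1)$ for each $n\geq 2$, which chains to $\overline{v}_{1}/1\geq \overline{v}_{2}/2\geq \cdots \geq \overline{v}_{n}/n$.

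There is essentially no obstacle here: both claims collapse to the monotonicity of $n\mapsto \frac{1}{n+1}$, once the nonnegativity of $\overline{v}_{1}$ is recorded. If desired, I would add a closing remark that the two chains together reproduce the bounds in (\ref{basic_ineq}), so the corollary confirms that Axioms \ref{colab_funct}--\ref{incent_compat} are consistent with the basic inequalities of Section \ref{sec_equal}.
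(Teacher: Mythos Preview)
Your argument is correct and close in spirit to the paper's: both reduce the corollary to a pair of one-line algebraic checks. The only difference is the starting point. You plug in the closed form $\overline{v}_{n}=\frac{2n}{n+1}\overline{v}_{1}$ and exploit the monotonicity of $n\mapsto \frac{1}{n+1}$; the paper instead works with the recursive relation $\overline{v}_{n}=\frac{n^{2}}{(n-1)(n+1)}\overline{v}_{n-1}$ derived just before, and verifies directly that this multiplicative factor lies in the interval $[1,\,n/(n-1)]$ required by (\ref{basic_ineq}), i.e.\ $n^{2}\geq n^{2}-1$ for the lower bound and $n^{2}(n-1)\leq n(n^{2}-1)$ for the upper one. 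Your route has the small advantage of using the very expression~(\ref{exp_equal_credits}) named in the corollary, while the paper's route avoids unwinding the recursion and thus sits naturally at the point in the proof where (\ref{exp_recur1}) has just been obtained; substantively the two are interchangeable, and your explicit mention of $\overline{v}_{1}\geq 0$ makes transparent an assumption the paper leaves implicit.
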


Table \ref{tab_results1} shows the obtained expected value and the credits
awarded to each author in the case that all authors are equally important.
In the case of single-author publications, the expected value and the
credits awarded have the same value, which is normalized to the unit, i.e., $%
\overline{v}_{1}=\overline{c}_{1}=1.$

\begin{table}[tbp] \centering%
\begin{tabular}{cccccccccc}
\hline
& ${\small n=1}$ & ${\small n=2}$ & ${\small n=3}$ & ${\small n=4}$ & $%
{\small n=5}$ & ${\small n=6}$ & ${\small n=7}$ & ${\small n=8}$ & ${\small %
n=9}$ \\ \hline
$\overline{{\small c}}_{n}$ & ${\small 1.000}$ & ${\small 0.66}\overline{%
{\small 6}}$ & ${\small 0.500}$ & ${\small 0.400}$ & ${\small 0.33}\overline{%
{\small 3}}$ & ${\small 0.286}$ & ${\small 0.250}$ & ${\small 0.222}$ & $%
{\small 0.200}$ \\ 
&  &  &  &  &  &  &  &  &  \\ 
$\overline{v}_{n}$ & ${\small 1.000}$ & ${\small 1.33}\overline{{\small 3}}$
& ${\small 1.50}$ & ${\small 1.60}$ & ${\small 1.66}\overline{{\small 6}}$ & 
${\small 1.714}$ & ${\small 1.750}$ & ${\small 1.778}$ & ${\small 1.800}$ \\ 
\hline
\end{tabular}%
\caption{\footnotesize Equaly important authors - credits awarded to each author and the
expected value of the $n$-authors publication (unit of measure $\overline{v}_{1}=\overline{c}_{1}=1$).}%
\label{tab_results1}%
\end{table}%

Note that as the number of authors and thus the expected value of the
publication increases, the credits awarded to each author do not. The
expected value increases monotonically with the number of authors, but at a
decreasing rate (i.e., $\overline{v}_{n}$ is concave in the number of
authors), which implies that the marginal contribution of a new author
becomes less and less significant. Consequently, the credits awarded to each
author decrease.

\bigskip

The expected value and credits of an $n$-authors publication in Proposition %
\ref{prop_equal_credits} have some interesting properties. For instance, if
all authors are equally important, then our results state that in scientific
terms \textquotedblleft two $2$-authors publications are worth more credits
than one single-author publication (i.e., ${\small 0.66}\overline{{\small 6}}%
+{\small 0.66}\overline{{\small 6}}{\small >1}$)", and \textquotedblleft two 
$3$-authors publications are worth the same credits as one single-author
publication (i.e., ${\small 0.500}+{\small 0.500=1}$)", and so on. Note also
that in our context, the overall expected value of a $2$- and a $3$-authors
publication are worth $33\%$ and $50\%$ more than a single-author
publication, respectively. The exact relations are shown in Table \ref%
{tab_results1}.

\subsection{\textbf{The case of }$n$\textbf{\ ordered authors}\label%
{sec_diff}}

Finally, in the case that authors' importance is determined by the order in
which their names appear, we can also apply the publication expected value
obtained in Proposition \ref{prop_equal_credits}, but with the credits
distributed among the ordered co-authors according to one of the methods
proposed in the literature (\citealp{abramo2013importance}; %
\citealp{assimakis2010new}; \citealp{egghe2000methods}; %
\citealp{hagen2008harmonic}; \citealp{kim2014network}; %
\citealp{liu2012fairly}; \citealp{lukovits1995correct}; %
\citealp{sekercioglu2008quantifying}; \citealp{stallings2013determining}; %
\citealp{trueba2004robust}; \citealp{van1997fractional}).

\begin{table}[tbp] \centering%
\begin{tabular}{ccccccc}
\hline
& ${\small n=1}$ & ${\small n=2}$ & ${\small n=3}$ & ${\small n=4}$ & $%
{\small n=5}$ & ${\small n=6}$ \\ \hline
$\overline{{\small c}}_{1n}$ & ${\small 1.000(100)}$ & ${\small 1.000(75.0)}$
& ${\small 0.91}\overline{{\small 6}}{\small (61.}\overline{{\small 1}}%
{\small )}$ & ${\small 0.83}\overline{{\small 3}}{\small (52.0)}$ & ${\small %
0.76}\overline{{\small 1}}{\small (45.}\overline{{\small 6}}{\small )}$ & $%
{\small 0.700(40.8)}$ \\ 
$\overline{{\small c}}_{2n}$ &  & ${\small 0.33}\overline{{\small 3}}{\small %
(25.0)}$ & ${\small 0.41}\overline{{\small 6}}{\small (27.}\overline{{\small %
7}}{\small )}$ & ${\small 0.43}\overline{{\small 3}}{\small (27.1)}$ & $%
{\small 0.42}\overline{{\small 7}}{\small (25.}\overline{{\small 6}}{\small )%
}$ & ${\small 0.414(24.2)}$ \\ 
$\overline{{\small c}}_{3n}$ &  &  & ${\small 0.16}\overline{{\small 6}}%
{\small (11.}\overline{{\small 1}}{\small )}$ & ${\small 0.23}\overline{%
{\small 3}}{\small (14.6)}$ & ${\small 0.26}\overline{{\small 1}}{\small (15.%
}\overline{{\small 6}}{\small )}$ & ${\small 0.271(15.8)}$ \\ 
$\overline{{\small c}}_{4n}$ &  &  &  & ${\small 0.100(06.3)}$ & ${\small %
0.150(9.0)}$ & ${\small 0.176(10.3)}$ \\ 
$\overline{{\small c}}_{5n}$ &  &  &  &  & ${\small 0.06}\overline{{\small 6}%
}{\small (4.0)}$ & ${\small 0.105(06.}\overline{{\small 1}}{\small )}$ \\ 
$\overline{{\small c}}_{6n}$ &  &  &  &  &  & ${\small 0.048(02.}\overline{%
{\small 7}}{\small )}$ \\ 
&  &  &  &  &  &  \\ 
$\overline{v}_{n}$ & ${\small 1.000}$ & ${\small 1.33}\overline{{\small 3}}$
& ${\small 1.500}$ & ${\small 1.600}$ & ${\small 1.66}\overline{{\small 6}}$
& ${\small 1.714}$ \\ \hline
\end{tabular}%
\caption{\footnotesize Ordered authors - credits awarded to each author (the values inside
the brackets are the percentages over the total) and the expected value of the
$n$-authors publication (unit of measure $\overline{v}_{1}=\overline{c}_{1}=1$).}%
\label{tab_ordered}%
\end{table}%

Table \ref{tab_ordered} shows the credits awarded to each author ($\overline{%
c}_{in}$) in an ordered $n$-authors publication for $n=1,...,6,$ according
to the method proposed by \cite{stallings2013determining}. As in this paper,
the \cite{stallings2013determining} approach also follows an axiomatic
approach. Their approach is very practical and can deal with a large variety
of particular cases in terms of authors' ordering. \cite{osorioimpossibility}%
\ presents a detailed study on the properties and limitations of this and
other counting methods.

\section{\textbf{Comparison of theoretical (expected) and empirical values 
\label{sec_compare}}}

In this section, we study whether the expected values in Table \ref%
{tab_results1} make sense. In particular, how do the theoretical values
correlate with empirical values? In this context, we study whether the
theoretical values are in agreement with empirical values based on a large
dataset with bibliometric data. In order to achieve this objective, we use
citations as a measure of \textquotedblleft value\textquotedblright ,
because citations are usually applied to assess the usefulness and the value
of publications for other researchers (\citealp{bornmann2017measuring}).

\subsection{\textbf{Data}}

The bibliometric data used in this paper are from an in-house database
developed and maintained by the Max Planck Digital Library (MPDL, Munich)
and derived from the Science Citation Index Expanded (SCI-E), the Social
Sciences Citation Index (SSCI), and the Arts and Humanities Citation Index
(AHCI) prepared by Clarivate Analytics (see https://clarivate.com), formerly
the IP \& Science business of Thomson Reuters.

The in-house database includes the number of authors for each paper since
1980. In this study, we considered only papers with the document type
\textquotedblleft article\textquotedblright\ to avoid distortion of the
results by the use of papers with different document types. We included all
articles published between 2000 and 2014; more recent years have been
excluded because the window for the citation metrics becomes too small.
Citations are counted in the in-house database from publication year until
the end of 2016.

Two citation-based indicators are considered in this study to empirically
assess the value of the $n$-authors publications. Citations are one of the
most frequently used metrics in research evaluations which reflect the
impact of publications (as one part of quality) (%
\citealp{martin1983assessing}). The empirical analysis is based on articles
from different fields. Since researchers in different fields have different
publication and citation cultures, it is standard in bibliometrics to apply
field-normalized citations scores for the impact comparison of papers from
different fields. In the following, we use two different field-normalized
indicators which are standard in bibliometrics to cross-check the results (%
\citealp{bornmann2015methods}):

For the normalized citation score (NCS), the citation counts of an article
are divided by the expected citation impact. For the calculation of the
expected impact, the average citation rate is calculated based on all
articles which have been published in the same year and field as the focal
article. To aggregate the citation impact of more than one paper (e.g., all
papers with one author), the arithmetic average was used. Percentiles offer
an alternative to the mean-based quotient NCS. A percentile is a value below
which a certain proportion of publications fall. For the calculation of
percentiles, all papers in a field and publication year were ranked in
decreasing order by their number of citations. Then, the percentiles have
been calculated according to the formula $(i-0.5)/n\times 100$---whereby $i$
is the rank number and $n$ the number of papers in the set (%
\citealp{hazen1914storage}). The percentiles for a set of papers (e.g., all
papers with one author) have been aggregated by using the median.

As field classification scheme for normalizing impact, the WoS subject
categories have been used for both indicators in this study. These subject
categories are based on sets of journals that publish papers in similar
research areas.\qquad

\begin{figure}[h]
\begin{center}
\includegraphics[width=.7\textwidth]{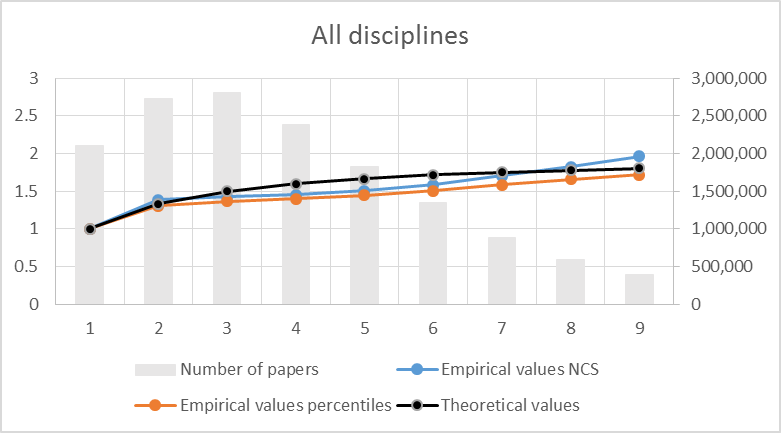}
\end{center}
\caption{{\protect\footnotesize Comparison of the theoretically derived
expected values for different numbers of authors with the empirically
derived field-normalized citation scores (NCS and percentiles). The figure
also shows the number of papers with different numbers of authors.}}
\label{figure_main}
\end{figure}

\subsection{\textbf{Results}}

Figure \ref{figure_main} shows the comparison of the theoretically derived
publication values for different numbers of authors with the empirically
derived field-normalized citation scores (NCS and percentiles). The figure
does not visualize the field-normalized citation scores, but relative
values, which show how the citation impact varies with different numbers of
authors. For example, the visualized NCS value ($1.38$) for two authors has
been calculated by multiplying the (empirical) NCS=0.895 (for two authors)
with $v_{1}=1$ (see Table \ref{tab_results1}) and dividing the product by
the empirical NCS score for one author (NCS=0.647).

As the results in Figure \ref{figure_main} over all disciplines show, the
empirical and theoretical values are in close agreement. Since the WoS
database is mostly based on natural sciences publications, the results in
Figure \ref{figure_main} are especially driven by these publications. Other
disciplines have not only different publication and citation cultures, but
also different treatments of authorship positions.

Thus, we produced further results for six broad disciplines: (1) natural
sciences, (2) engineering and technology, (3) medical and health sciences,
(4) agricultural sciences, (5) social sciences, and (6) humanities. The
broad disciplines are aggregated paper sets of WoS subject categories based
on the OECD category scheme (major codes). The results for the different
disciplines are shown in Figure \ref{figure_all}. As expected, the results
for the natural sciences are similar to the results in Figure \ref%
{figure_main}. Since Figure \ref{figure_all} also includes the number of
papers with the different numbers of authors, it is clearly visible that the
WoS database is mainly based on papers from the natural sciences.

\begin{figure}[tbp]
\begin{subfigure}[b]{0.5\textwidth}
    \includegraphics[width=\textwidth]{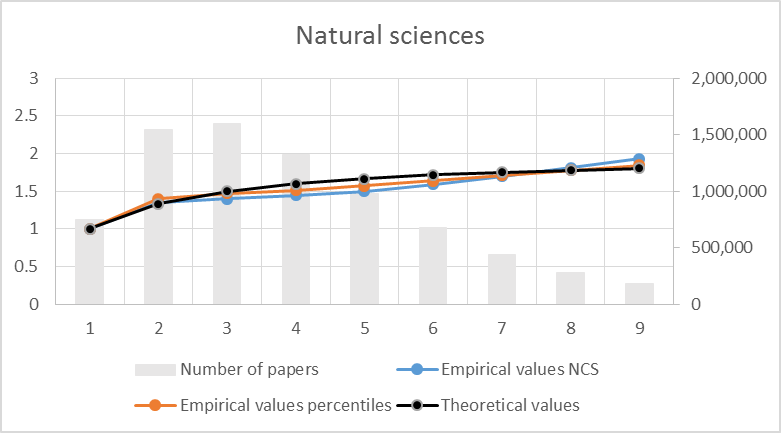}
    \label{figRH}
  \end{subfigure}\hfill 
\begin{subfigure}[b]{0.5\textwidth}
    \includegraphics[width=\textwidth]{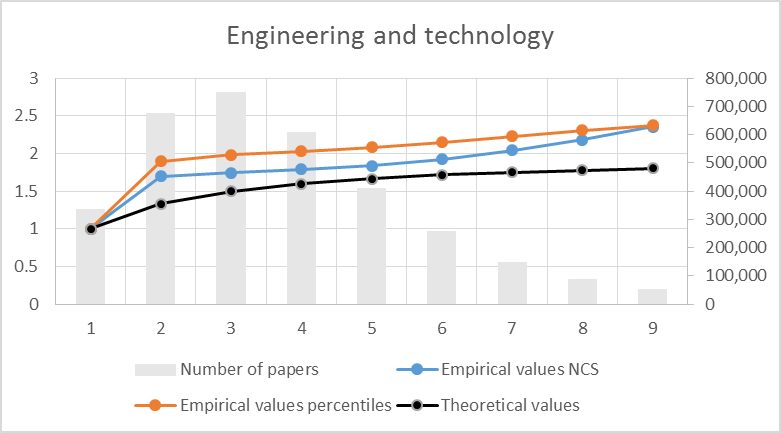}
    \label{figPH}
  \end{subfigure}\linebreak 
\begin{subfigure}[b]{0.5\textwidth}
    \includegraphics[width=\textwidth]{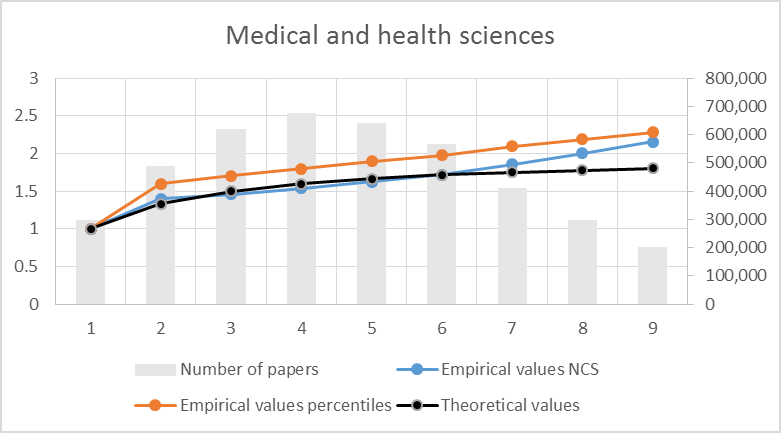}
    \label{figIH}
  \end{subfigure}\hfill 
\begin{subfigure}[b]{0.5\textwidth}
    \includegraphics[width=\textwidth]{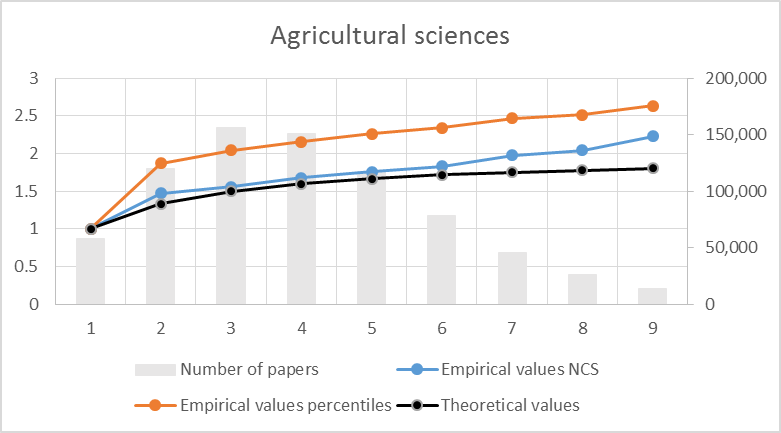}
    \label{figQH}
  \end{subfigure}\linebreak 
\begin{subfigure}[b]{0.5\textwidth}
    \includegraphics[width=\textwidth]{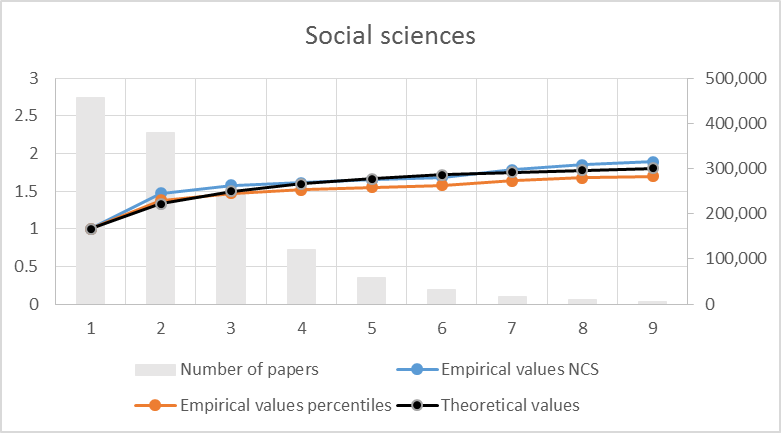}
    \label{figRH}
  \end{subfigure}\hfill 
\begin{subfigure}[b]{0.5\textwidth}
    \includegraphics[width=\textwidth]{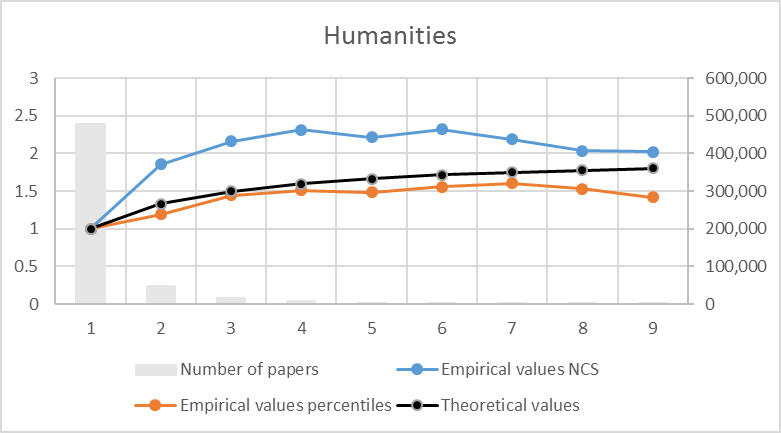}
    \label{figPH}
  \end{subfigure}\linebreak
\caption{{\protect\footnotesize Comparison of the theoretically derived
publication values for different numbers of authors with the empirically
derived field-normalized citation scores (NCS and percentiles) for six broad
disciplines. The figure also shows the number of papers with different
numbers of authors.}}
\label{figure_all}
\end{figure}

The number of papers for the different numbers of authors reveals that the
authorship cultures are different among the disciplines. In natural
sciences, engineering and technology, medical and health sciences, as well
as agricultural sciences, we see peaks at around two or three authors. In
the social sciences and humanities, the publication with only one author is
the most frequent publication type (especially in the humanities).

The comparison of the empirical with the theoretical values in Figure \ref%
{figure_all} indicates that the highest agreement is obtained in natural and
social sciences (although the social sciences show a different authorship
pattern than the natural sciences). In medical and health sciences, the NCS
values are close to the theoretical values, but the percentiles differ. For
engineering and technology as well as agricultural sciences, the percentile
values in particular differ from the theoretical values. The greatest
difference between empirical evidence and theory can be seen for the
humanities. We also produced the results for the OECD minor codes to present
more detailed field-specific results. The results can be found in \ref%
{sec_appendix_figures}. The more detailed results confirm the results from
the higher aggregation level (the major codes).

\bigskip

To sum up, the empirical results indicate that the theoretical values are
close to the empirical values---if citation impact is used to assess the
value of publications. However, there are differences between the
disciplines: while the natural and social sciences are close to the
expectations, the humanities show some differences. The other disciplines
demonstrate reasonably good agreement between theoretical and empirical
results.

\section{\textbf{Conclusion\label{sec_concl}}}

Multidisciplinary scientific collaboration is increasing (%
\citealp{gazni2012mapping}; \citealp{katz1997research}; %
\citealp{lariviere2015team}; \citealp{persson2004inflationary}; %
\citealp{wuchty2007increasing}). However, it is extremely difficult for
third parties (e.g., a panel or an evaluation committee) to quantify the
contribution of each author. In this context, we should be able to find ways
to account and distinguish between publications with different number of
authors in order to obtain evaluations that are more suitable.

This paper attempts to quantify the expected value of an $n$-authors
publication. Publications with several authors tend to have more impact
(e.g., in terms of citations) than single-author publications, maybe because
of the possibility of synergies and cross-fertilization of ideas (%
\citealp{hsu2011correlation}; \citealp{onodera2015factors}; among others).
However, the strength of this effect is limited because of coordination
difficulties, free-riding, and other forms of opportunistic behavior. A
greater number of authors may lead to larger aggregate efforts, but not
necessarily to larger individual efforts, which tends to decrease with the
number of authors.

In this paper, we propose a set of principles and axioms that we consider
fundamental to determine the expected value of an $n$-authors publication.
The result is a unique measure of the expected value and credits of $n$%
-authors publications. The expected value of the publication increases
monotonically with the number of authors, but at a decreasing rate because
the marginal contribution of new co-authors becomes less and less
significant.

Using a comprehensive set of bibliometric data, we found that the
theoretically obtained expected values and patterns are close to the
empirical values for some disciplines. These results provide support in
favor of the method proposed in this paper. However, these results also make
explicit that this method should not be taken as a universal solution, which
can be applied indiscriminately to quantify the expected value of $n$%
-authors publications in all disciplines.

\bigskip

The proposed approach follows a set of principles or axioms that we consider
fundamental to derive the expected value of $n$-authors publications.
However,\ this approach or any other approach obtained according to any
other principles or axioms will always be a subject of discussion. As a rule
it is difficult to agree on one particular method (%
\citealp{waltman2016review}). There are several reasons for this lack of
agreement. First, the expected value of the publication and the associated
counting method play a crucial role in academics' and scientists' lives.
Second, judgements and evaluations of qualitative issues like those
concerning scientific publications are always subjective and open to debate,
which creates consensus difficulties and allows the coexistence of different
approaches. However, our proposed solution is practical and has the
advantage of not requiring information other than the number of authors.

\bigskip\ 

Finally, this study suggests a relation between the expected value of the
publication and the number of authors; an aspect that has been discussed,
but not formally addressed in the literature (\citealp{hsu2011correlation}; %
\citealp{onodera2015factors}; among others). In this context, we expect that
our findings can help researchers and decision-makers to choose and
implement more effective and fair counting methods that take into account
the benefits of collaboration.

\bigskip

\bigskip

\bigskip

\noindent {\footnotesize {\textbf{Acknowledgments:}} We wish to thank to
Ricardo Ribeiro, Juan Pablo Rinc\'{o}n-Zapatero and Ludo Waltman for helpful
comments and discussions. Financial support from the Spanish Ministerio of
Ciencia y Innovaci\'{o}n project ECO2016-75410-P, GRODE and the Barcelona
GSE is gratefully acknowledged.}

{\footnotesize The bibliometric data used in this paper is from an in-house
database developed and maintained by the Max Planck Digital Library (MPDL,
Munich) and derived from the Science Citation Index Expanded (SCI-E), Social
Sciences Citation Index (SSCI), Arts and Humanities Citation Index (AHCI)
prepared by Clarivate Analytics (see https://clarivate.com), formerly the IP
\& Science business of Thomson Reuters (Philadelphia, Pennsylvania, USA).}

\bigskip

\pagebreak \appendix

\section{\textbf{Proofs of the Propositions}\label{sec_appendix_proofs}}

\begin{proof}[Proof of Proposition \protect\ref{prop_equal_credits} and
Corollary \protect\ref{coro_satisfies}]
In order to show Proposition \ref{prop_equal_credits}, we aggregate the
mathematical implications of Axioms \ref{colab_funct}-\ref{incent_compat}.
This construction will lead to expression (\ref{exp_equal_credits}).\ Axioms %
\ref{colab_funct} and \ref{unif_dist} establish that the value of a $n$%
-authors collaboration is the sum of the $n$ authors' expected efforts,
i.e., $\overline{v}_{n}=\sum\nolimits_{i=1}^{n}E(e_{in})$ for $%
n=1,...,\infty .$\ In addition, the independent and uniform assumption of
Axiom \ref{unif_dist} implies the continuous probability density function
(PDF) $f(e_{in})=1/\overline{c}_{n-1}$ for $i=1,...,n$ and $n=1,...,\infty ,$
which in the case of $n$ independent random variables implies the joint
probability density function $\prod\nolimits_{i=1}^{n}f(e_{in})$ for $%
n=1,...,\infty .$ Axiom \ref{aggregated_eff} requires that the aggregated
effort of an $n$-authors collaboration must be above some threshold, i.e., $%
\sum\nolimits_{i=1}^{n}e_{in}\geq \overline{v}_{n-1}$ for $n=2,...,\infty .$%
\ Finally, Axiom \ref{incent_compat} establishes that in order for an $n$%
-authors collaboration to be stable and each author to have incentives to
participate in it, the individual effort must also satisfy $e_{in}\in
\lbrack 0,\overline{e}_{n-1}]$ for $i=1,...,n$ and $n=2,...,\infty ,$ where $%
\overline{e}_{n-1}=\overline{c}_{n-1}=\overline{v}_{n-1}/(n-1).$\ The
consideration of Axioms \ref{colab_funct}-\ref{incent_compat} implies that
the value of an $n$-authors collaboration is by construction uniquely given
by the following conditional expectation:%
\begin{eqnarray}
\overline{v}_{n} &=&E(\sum\nolimits_{i=1}^{n}e_{in}|H)=E(\mathbf{1}%
_{H}\sum\nolimits_{i=1}^{n}e_{in})/P(H)  \notag \\
&=&\int\nolimits_{(e_{1n},e_{2n},...,e_{nn})\in \lbrack 0,\overline{v}%
_{n-1}/(n-1)]^{n}}\sum%
\nolimits_{i=1}^{n}e_{in}dP(e_{1n},e_{2n},...,e_{nn}|H),
\label{exp_general_integ}
\end{eqnarray}%
for $n=2,...,\infty ,$ where the expectation is taken with respect to the
joint density function of the $n$-dimensional vector of efforts $%
(e_{1n},e_{2n},...,e_{nn})\in \lbrack 0,\overline{c}_{n-1}]^{n},$ and
conditional on the event $H=\{\sum\nolimits_{i=1}^{n}e_{in}\geq \overline{v}%
_{n-1}\},$\ where $P(H)$ denotes the probability of the event $H,$ and $%
\mathbf{1}_{H}$\ is an indicator function that takes the value $1$ when the
event $H$ occurs, and $0$\ otherwise. Consequently, the integral in
expression (\ref{exp_general_integ}) can be rewritten in the following
equivalent way: 
\begin{equation}
\overline{v}_{n}=\frac{\int\nolimits_{0}^{\overline{c}_{n-1}}f(e_{1n})...%
\int\nolimits_{0}^{\overline{c}_{n-1}}f(e_{nn})\mathbf{1}_{H}(\sum%
\nolimits_{i=1}^{n}e_{in})de_{nn}...de_{1n}}{\int\nolimits_{0}^{\overline{c}%
_{n-1}}f(e_{1n})...\int\nolimits_{0}^{\overline{c}_{n-1}}f(e_{nn})\mathbf{1}%
_{H}de_{nn}...de_{1n}},  \label{exp_integral1}
\end{equation}%
for $n=2,...,\infty .$ Note also that since $f(e_{in})=1/\overline{c}_{n-1}$
is constant and independent of $e_{in},$ we can trivially cancel the
numerator by the denominator. However, in order to solve this integral
analytically, the indicator function $\mathbf{1}_{H}$ in expression (\ref%
{exp_integral1}) must be passed to the integration limits. In this context,
we must rewrite expression (\ref{exp_integral1}) in the following equivalent
way, in which author $1$ provides more effort than author $2$, and so on in
decreasing order until author $n,$ i.e.:%
\begin{equation}
\overline{v}_{n}=\frac{n!\int\nolimits_{\frac{\overline{v}_{n-1}}{n}}^{%
\overline{c}_{n-1}}\int\nolimits_{\frac{\overline{v}_{n-1}-e_{1n}}{n-1}%
}^{e_{1n}}...\int\nolimits_{\frac{\overline{v}_{n-1}-e_{1n}...-e_{n-1n}}{1}%
}^{e_{n-1n}}(\sum\nolimits_{i=1}^{n}e_{in})de_{nn}...de_{2n}de_{1n}}{%
n!\int\nolimits_{\frac{\overline{v}_{n-1}}{n}}^{\overline{c}%
_{n-1}}\int\nolimits_{\frac{\overline{v}_{n-1}-e_{1n}}{n-1}%
}^{e_{1n}}...\int\nolimits_{\frac{\overline{v}_{n-1}-e_{1n}...-e_{n-1n}}{1}%
}^{e_{n-1n}}de_{nn}...de_{2n}de_{1n}},  \label{exp_integral2}
\end{equation}%
for $n=2,...,\infty ,$ where $n!$ is multiplying the numerator and the
denominator to denote that each of the $n$ authors must be in each of the $%
n! $ possible effort ordered permutations in order for expression (\ref%
{exp_integral2}) to be equivalent to expression (\ref{exp_integral1}). The
integration lower bound guarantees that condition $H$ is satisfied at all
steps of integration. For instance, for $e_{nn}$ we must have $e_{nn}\geq 
\frac{\overline{v}_{n-1}-e_{1n}-e_{2n}...-e_{n-1n}}{1},$ while for $e_{n-1n}$
we must have $e_{n-1n}\geq \frac{\overline{v}_{n-1}-e_{1n}-e_{2n}...-e_{n-2n}%
}{2},$ and so on. For instance, given the efforts provided by author $1$
until author $n-2$ (i.e., $e_{1n},$ $e_{2n},$ $...,e_{n-2n}$), and since
author $n-1$ cannot provide less effort than the last author $n$, the
minimum effort of author $n-1$ occurs when author $n-1$ and author $n$
divide equally the effort that is still left in order to satisfy condition $%
H,$ i.e., at $e_{n-1n}=\frac{\overline{v}_{n-1}-e_{1n}-e_{2n}...-e_{n-2n}}{2}%
.$ After this transformation, the integrals in the numerator and denominator
of expression (\ref{exp_integral2}) can be solved analytically. After some
algebra, we obtain that:%
\begin{equation}
\overline{v}_{n}=\frac{n^{2}}{(n-1)(n+1)}\overline{v}_{n-1},
\label{exp_recur1}
\end{equation}%
for $n=2,...,\infty ,$ where we have made use of the fact that $\overline{c}%
_{n-1}=\overline{v}_{n-1}/(n-1).$ Therefore, if we know the value of $%
\overline{v}_{1},$ we can recursively obtain the expression (\ref%
{exp_equal_credits}) of Proposition \ref{prop_equal_credits}. Thus, by
simply dividing $\overline{v}_{n}$ by the number of authors, we obtain that
the credits $\overline{c}_{n}$ awarded to each author when all authors are
equally important are given by:%
\begin{equation*}
\overline{c}_{n}=\frac{2(n!)^{2}}{n(n-1)!(n+1)!}\overline{v}_{1}=\frac{2}{n+1%
}\overline{v}_{1},
\end{equation*}%
for $n=1,...,\infty ,$ where \textquotedblleft $!$" denotes the factorial
symbol.

In order to show Corollary \ref{coro_satisfies},\ it is enough to show that
expression (\ref{exp_recur1}) falls inside the bounds defined by inequality (%
\ref{basic_ineq}), which is implied by inequalities (\ref{chain_value}) and (%
\ref{chain_average}). The lower bound is satisfied if $\overline{v}_{n}\geq 
\overline{v}_{n-1},$ i.e., if $n^{2}\geq n^{2}-1$ which is always true. The
upper bound is satisfied if $\overline{v}_{n}\leq n\overline{v}_{n-1}/(n-1),$
i.e., if $n^{2}(n-1)\leq n(n^{2}-1)$ which is also always true.
\end{proof}

\pagebreak

\section{\textbf{Field-specific results (OECD minor codes)}\label%
{sec_appendix_figures}}

\begin{figure}[!b]
\begin{subfigure}[b]{0.33333\textwidth}
    \includegraphics[width=\textwidth]{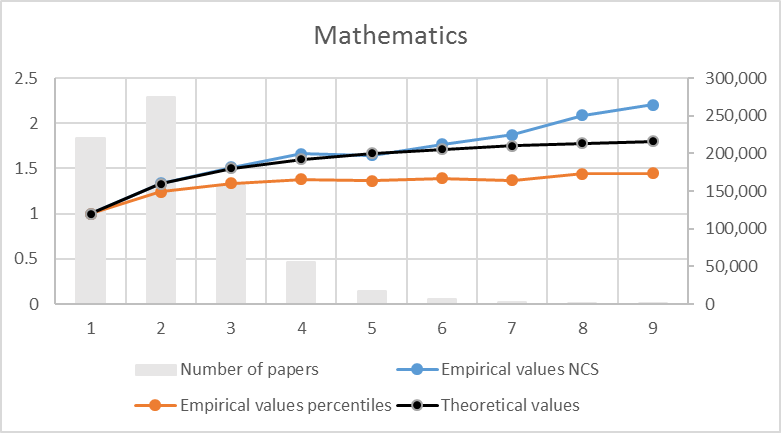}
    \label{figRH}
  \end{subfigure}%
\begin{subfigure}[b]{0.33333\textwidth}
    \includegraphics[width=\textwidth]{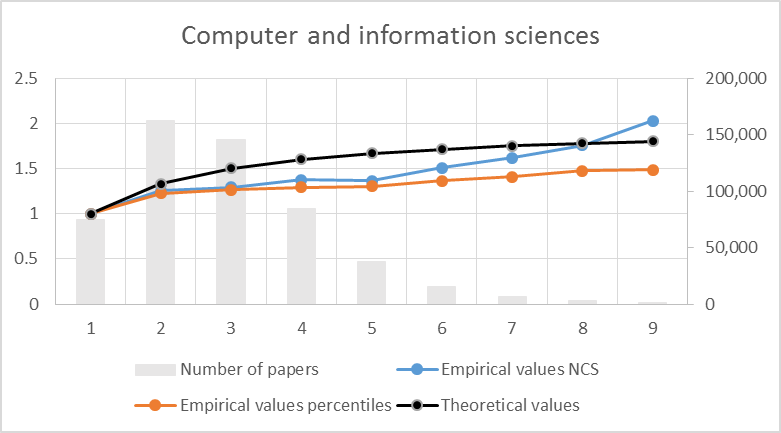}
    \label{figRH}
  \end{subfigure}\hfill 
\begin{subfigure}[b]{0.33333\textwidth}
    \includegraphics[width=\textwidth]{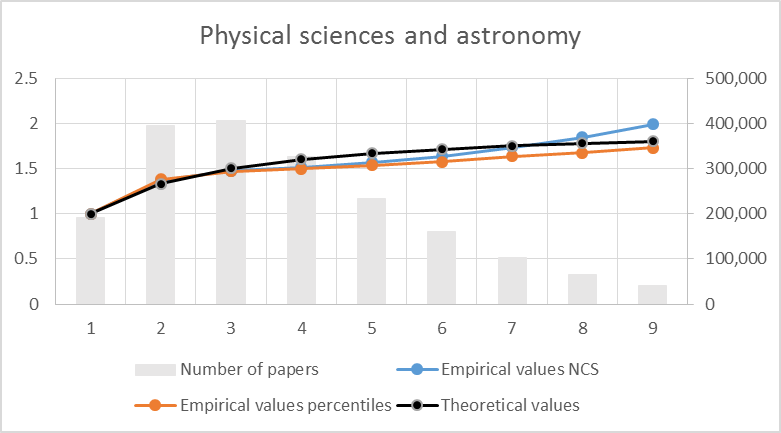}
    \label{figPH}
  \end{subfigure}\linebreak 
\begin{subfigure}[b]{0.33333\textwidth}
    \includegraphics[width=\textwidth]{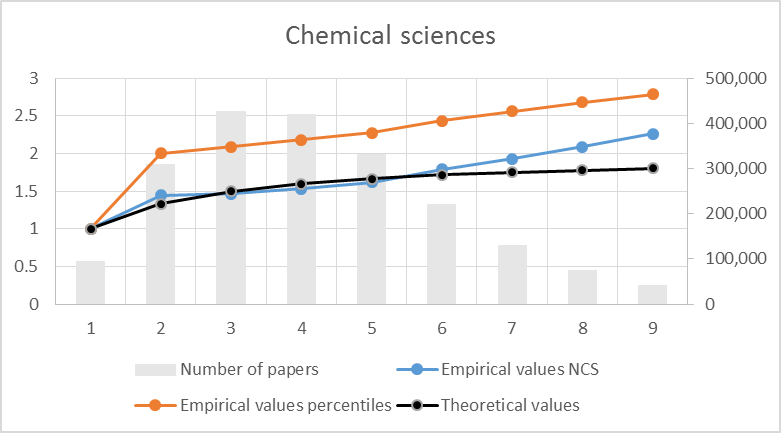}
    \label{figRH}
  \end{subfigure}%
\begin{subfigure}[b]{0.33333\textwidth}
    \includegraphics[width=\textwidth]{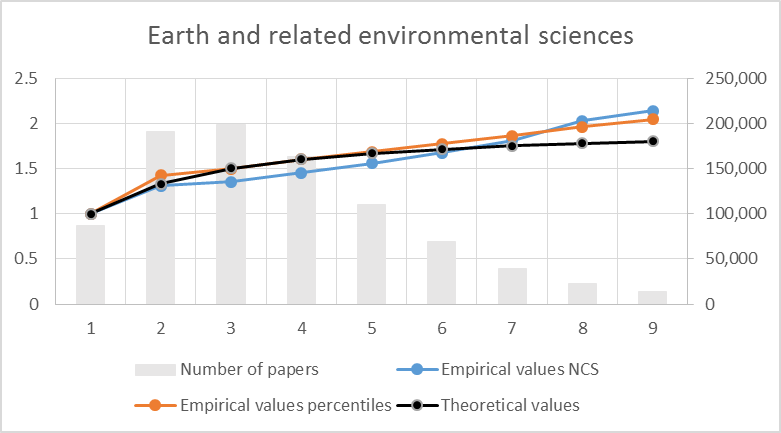}
    \label{figRH}
  \end{subfigure}\hfill 
\begin{subfigure}[b]{0.33333\textwidth}
    \includegraphics[width=\textwidth]{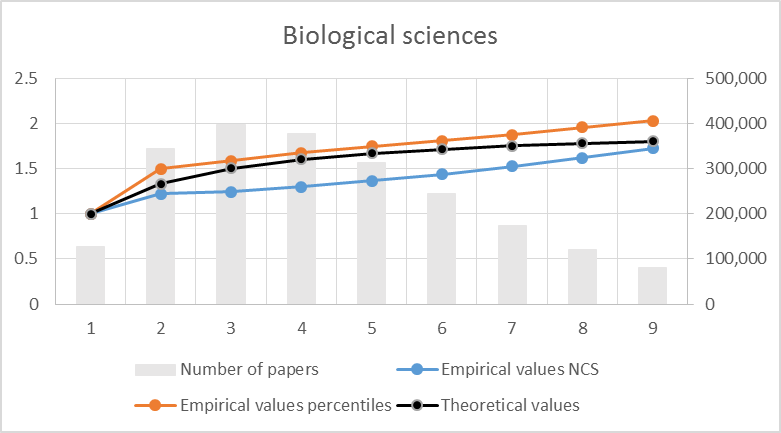}
    \label{figPH}
  \end{subfigure}\linebreak 
\begin{subfigure}[b]{0.33333\textwidth}
    \includegraphics[width=\textwidth]{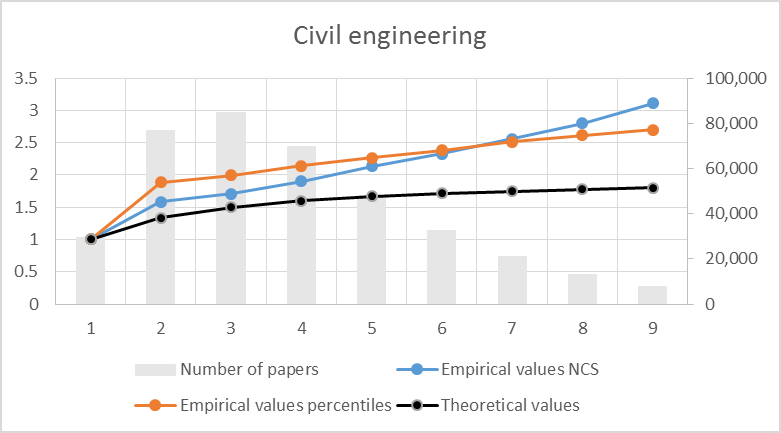}
    \label{figRH}
  \end{subfigure}%
\begin{subfigure}[b]{0.33333\textwidth}
    \includegraphics[width=\textwidth]{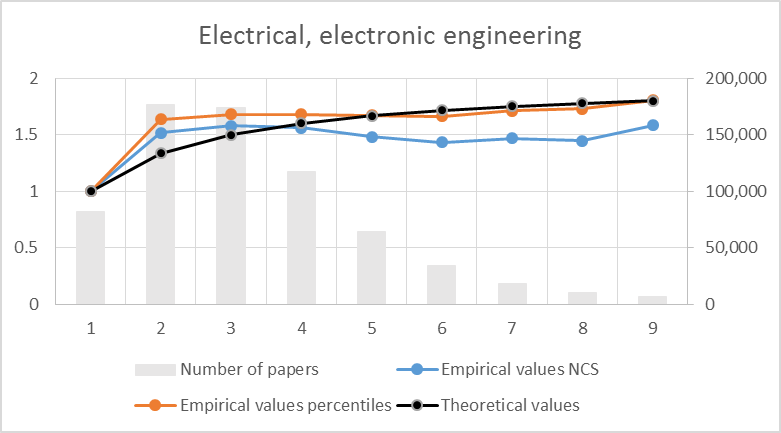}
    \label{figRH}
  \end{subfigure}\hfill 
\begin{subfigure}[b]{0.33333\textwidth}
    \includegraphics[width=\textwidth]{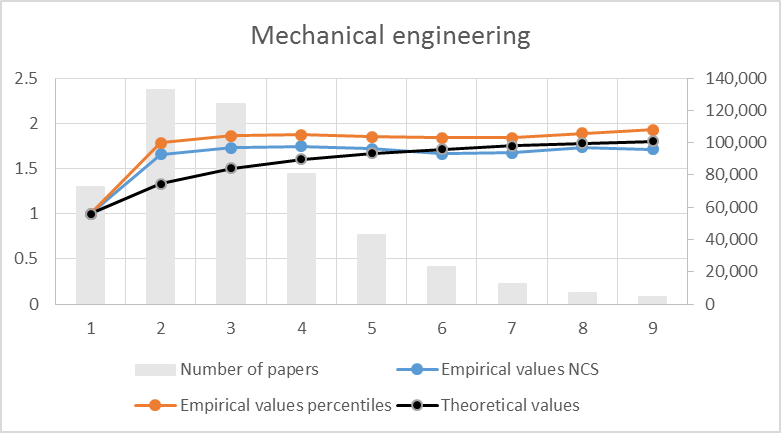}
    \label{figPH}
  \end{subfigure}\linebreak 
\begin{subfigure}[b]{0.33333\textwidth}
    \includegraphics[width=\textwidth]{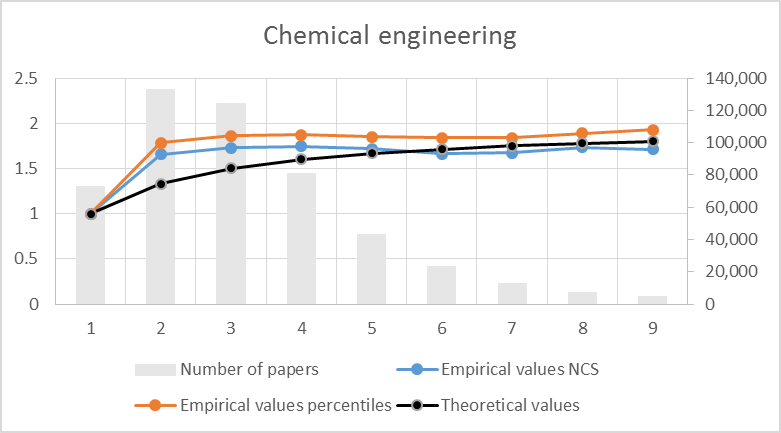}
    \label{figRH}
  \end{subfigure}%
\begin{subfigure}[b]{0.33333\textwidth}
    \includegraphics[width=\textwidth]{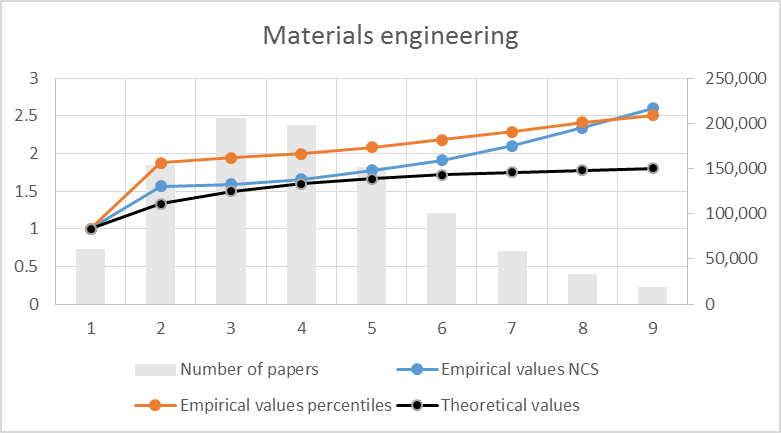}
    \label{figRH}
  \end{subfigure}\hfill 
\begin{subfigure}[b]{0.33333\textwidth}
    \includegraphics[width=\textwidth]{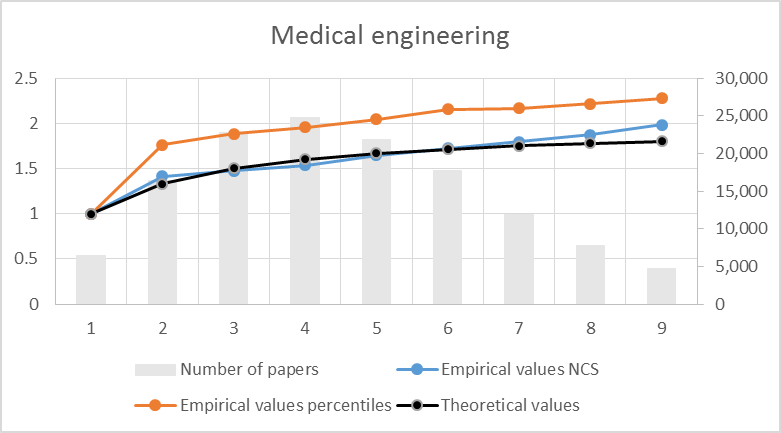}
    \label{figPH}
  \end{subfigure}\linebreak 
\begin{subfigure}[b]{0.33333\textwidth}
    \includegraphics[width=\textwidth]{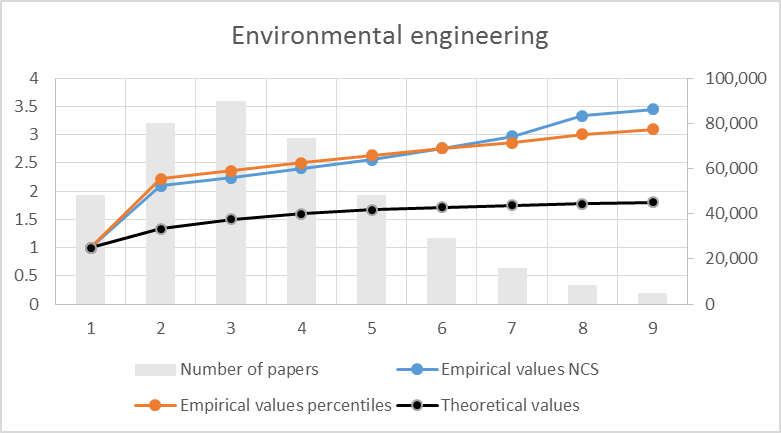}
    \label{figRH}
  \end{subfigure}%
\begin{subfigure}[b]{0.33333\textwidth}
    \includegraphics[width=\textwidth]{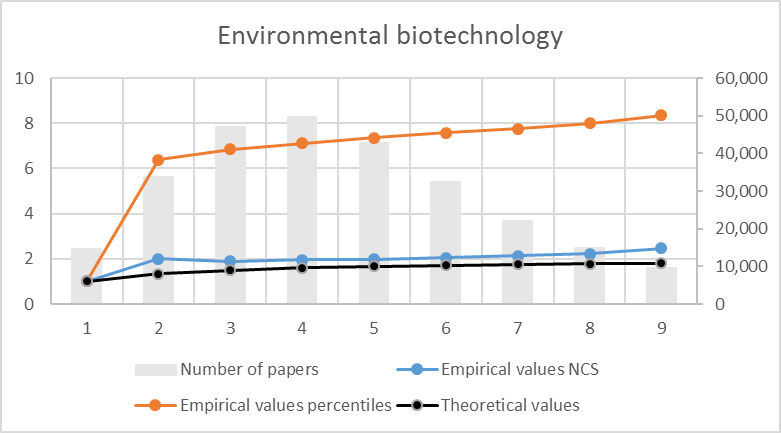}
    \label{figRH}
  \end{subfigure}\hfill 
\begin{subfigure}[b]{0.33333\textwidth}
    \includegraphics[width=\textwidth]{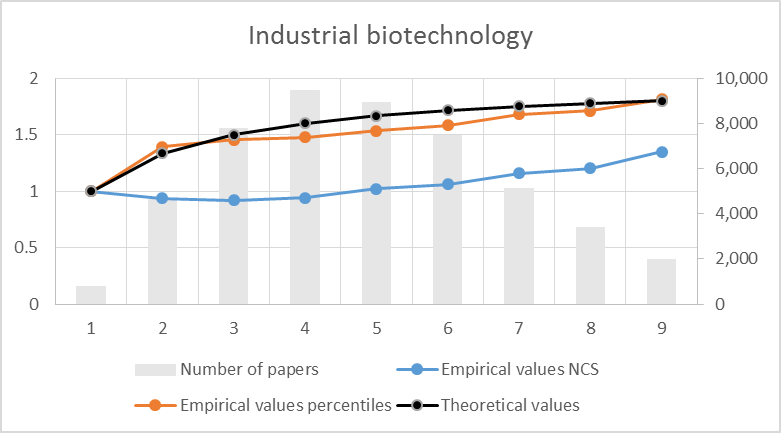}
    \label{figPH}
  \end{subfigure}\linebreak
\end{figure}

\begin{figure}[tbp]
\begin{subfigure}[b]{0.33333\textwidth}
    \includegraphics[width=\textwidth]{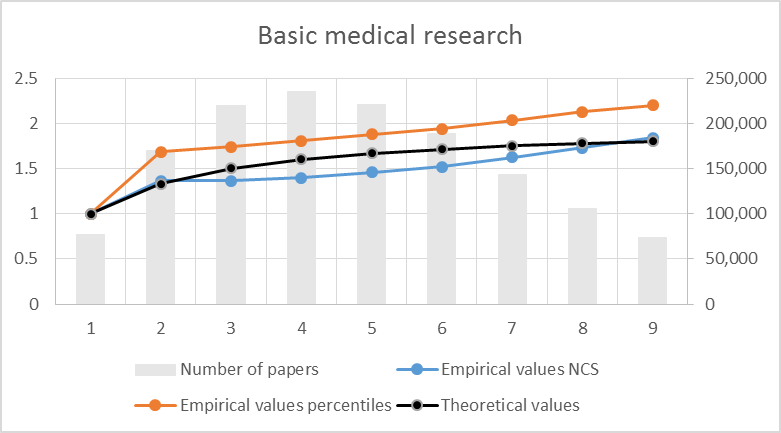}
    \label{figRH}
  \end{subfigure}%
\begin{subfigure}[b]{0.33333\textwidth}
    \includegraphics[width=\textwidth]{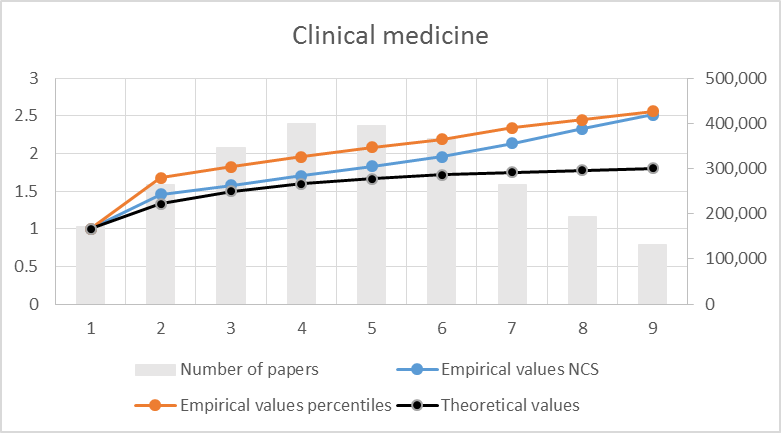}
    \label{figRH}
  \end{subfigure}\hfill 
\begin{subfigure}[b]{0.33333\textwidth}
    \includegraphics[width=\textwidth]{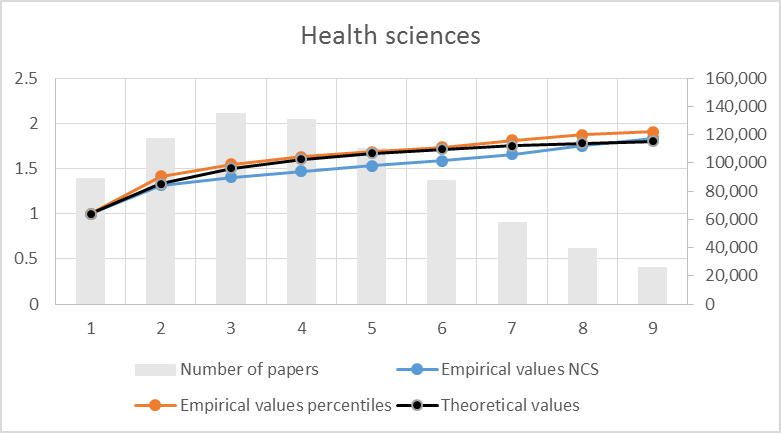}
    \label{figPH}
  \end{subfigure}\linebreak 
\begin{subfigure}[b]{0.33333\textwidth}
    \includegraphics[width=\textwidth]{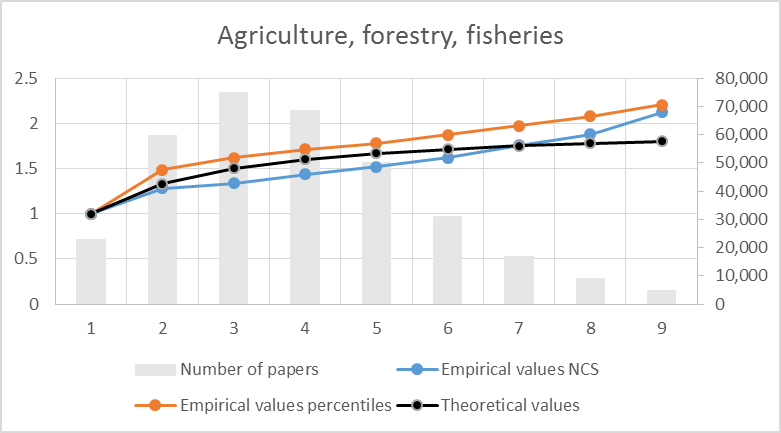}
    \label{figRH}
  \end{subfigure}%
\begin{subfigure}[b]{0.33333\textwidth}
    \includegraphics[width=\textwidth]{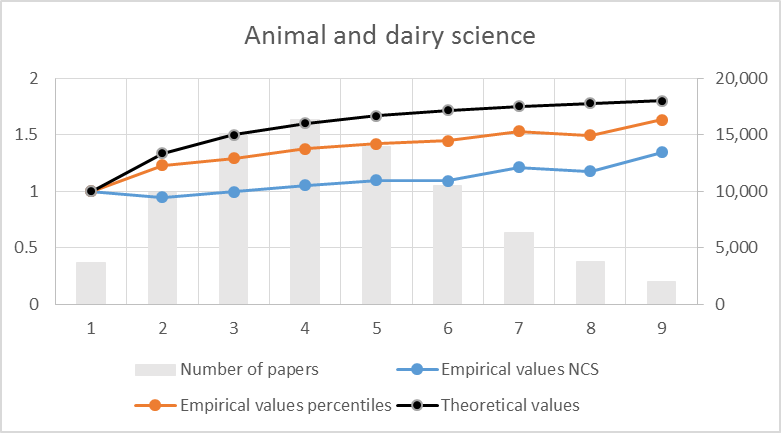}
    \label{figRH}
  \end{subfigure}\hfill 
\begin{subfigure}[b]{0.33333\textwidth}
    \includegraphics[width=\textwidth]{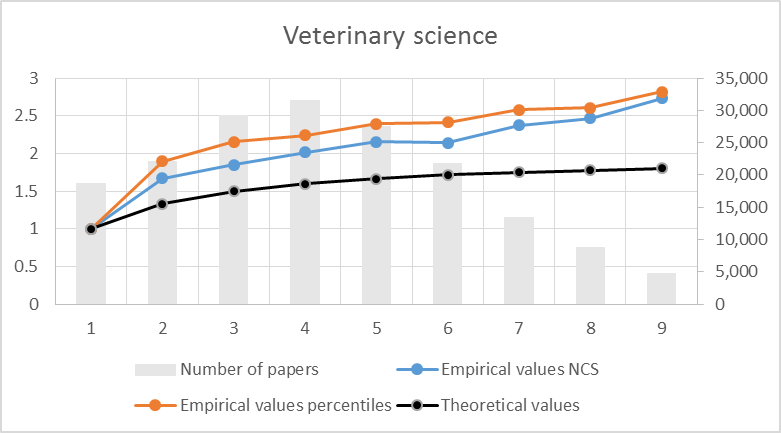}
    \label{figPH}
  \end{subfigure}\linebreak 
\begin{subfigure}[b]{0.33333\textwidth}
    \includegraphics[width=\textwidth]{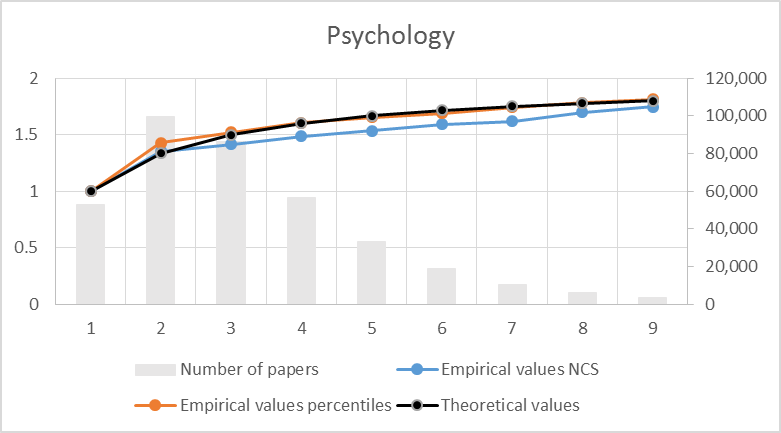}
    \label{figRH}
  \end{subfigure}%
\begin{subfigure}[b]{0.33333\textwidth}
    \includegraphics[width=\textwidth]{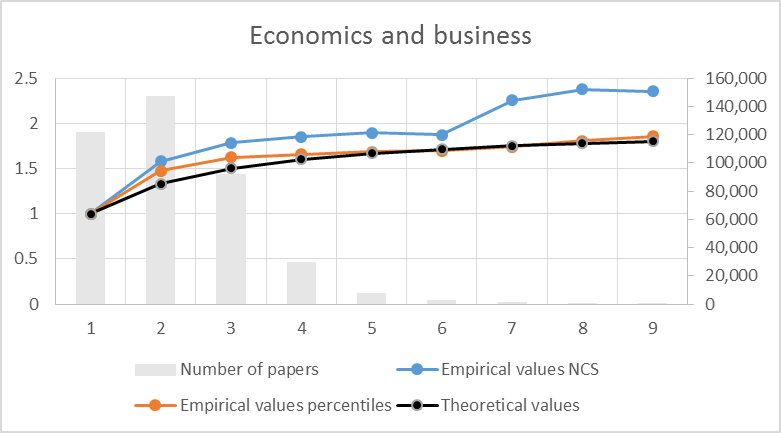}
    \label{figRH}
  \end{subfigure}\hfill 
\begin{subfigure}[b]{0.33333\textwidth}
    \includegraphics[width=\textwidth]{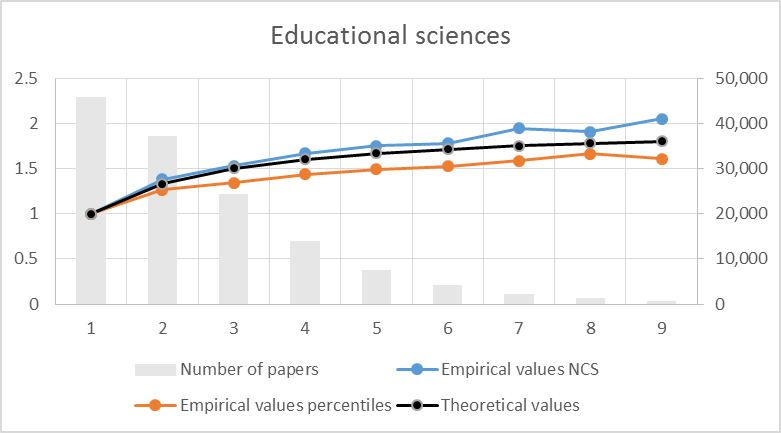}
    \label{figPH}
  \end{subfigure}\linebreak 
\begin{subfigure}[b]{0.33333\textwidth}
    \includegraphics[width=\textwidth]{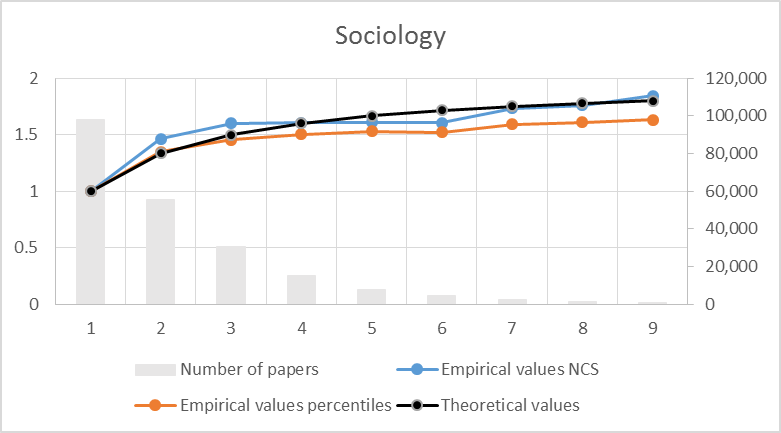}
    \label{figRH}
  \end{subfigure}%
\begin{subfigure}[b]{0.33333\textwidth}
    \includegraphics[width=\textwidth]{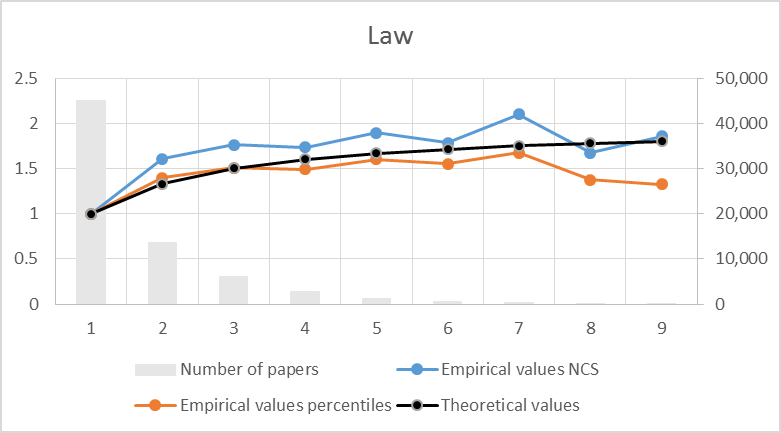}
    \label{figRH}
  \end{subfigure}\hfill 
\begin{subfigure}[b]{0.33333\textwidth}
    \includegraphics[width=\textwidth]{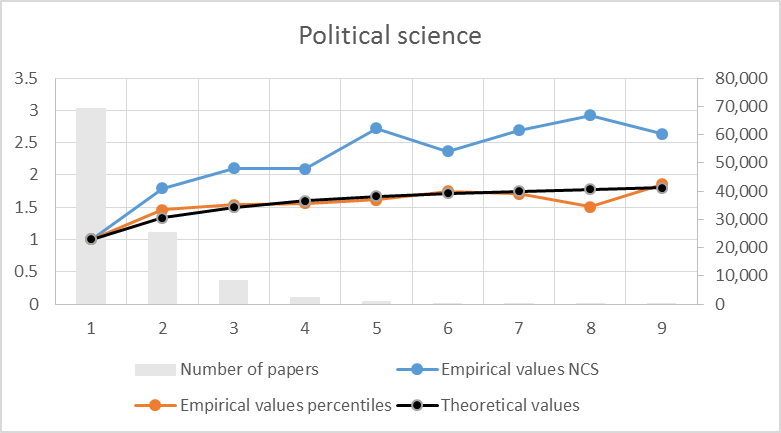}
    \label{figPH}
  \end{subfigure}\linebreak 
\begin{subfigure}[b]{0.33333\textwidth}
    \includegraphics[width=\textwidth]{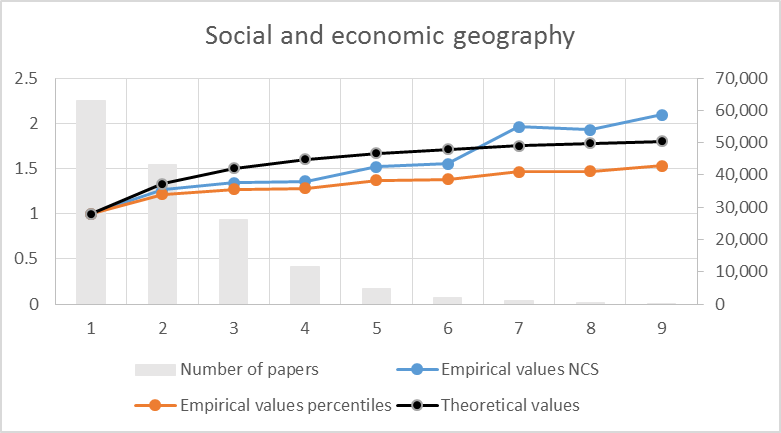}
    \label{figRH}
  \end{subfigure}%
\begin{subfigure}[b]{0.33333\textwidth}
    \includegraphics[width=\textwidth]{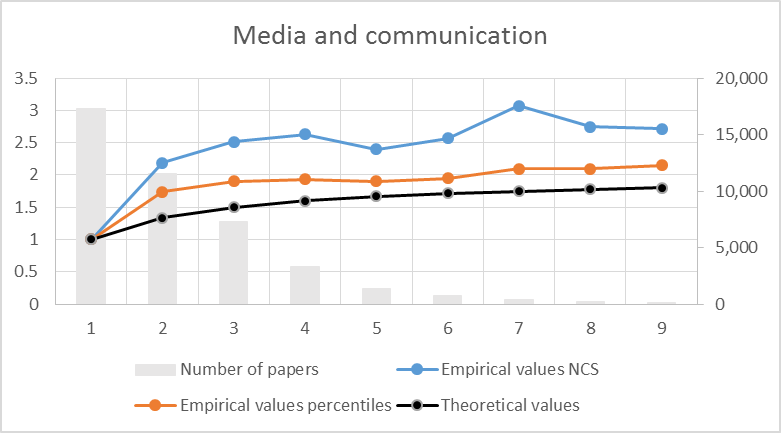}
    \label{figRH}
  \end{subfigure}\hfill 
\begin{subfigure}[b]{0.33333\textwidth}
    \includegraphics[width=\textwidth]{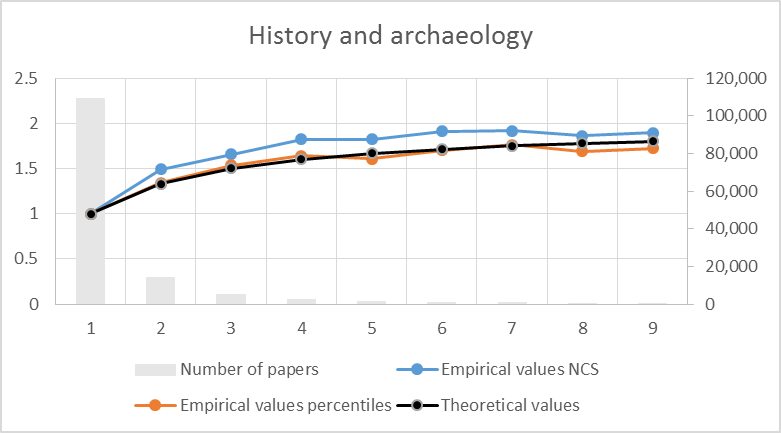}
    \label{figPH}
  \end{subfigure}\linebreak
\end{figure}

\begin{figure}[tbp]
\begin{subfigure}[b]{0.33333\textwidth}
    \includegraphics[width=\textwidth]{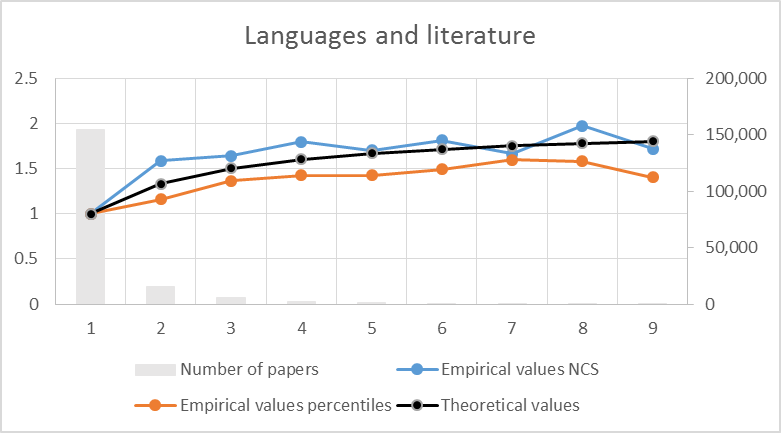}
    \label{figRH}
  \end{subfigure}%
\begin{subfigure}[b]{0.33333\textwidth}
    \includegraphics[width=\textwidth]{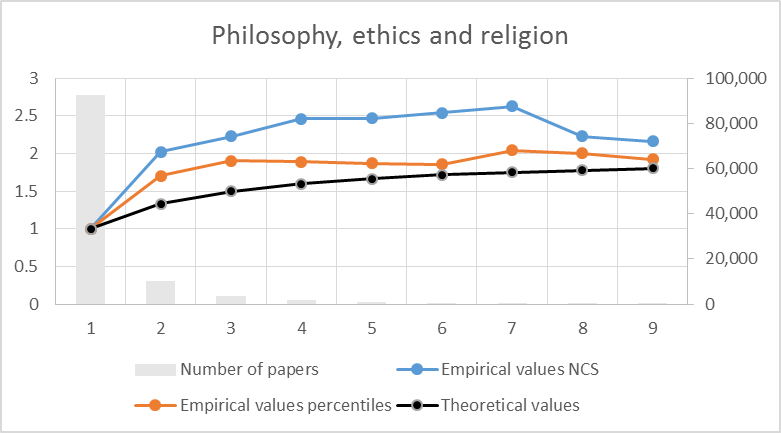}
    \label{figRH}
  \end{subfigure}\hfill 
\begin{subfigure}[b]{0.33333\textwidth}
    \includegraphics[width=\textwidth]{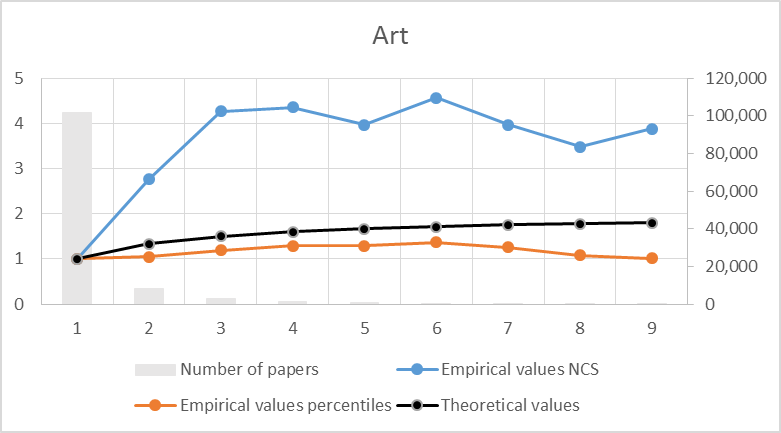}
    \label{figPH}
  \end{subfigure}\linebreak
\caption{{\protect\footnotesize Comparison of the theoretically derived
publication values for different numbers of authors with the empirically
derived field-normalized citation scores (NCS and percentiles) for 33
disciplines. The figures also show the number of papers with different
numbers of authors.}}
\end{figure}

\pagebreak

\section*{\protect\footnotesize References}

{\footnotesize 
\bibliographystyle{elsarticle-harv}
\bibliography{references}
}

\end{document}